\newtheorem{theorem}{Theorem}[section]
\newtheorem{lemma}[theorem]{Lemma}
\newtheorem{proposition}[theorem]{Proposition}
\newtheorem{definition}[theorem]{Definition}
\newtheorem{remark}[theorem]{Remark}
\newcommand{\A}{\mathcal{A}}
\newcommand{\Q}{\mathcal{Q}}
\newcommand{\R}{\mathbb{R}}
\newcommand{\Z}{\mathbb{Z}}
\newcommand{\N}{\mathbb{N}}
\newcommand{\G}{\mathcal{N}}
\newcommand{\E}{\mathbb{E}}
\renewcommand{\P}{\mathbb{P}}
\newcommand{\sd}{\Sigma\Delta}
\newcommand{\be}{\begin{enumerate}}
\newcommand{\bi}{\begin{itemize}}
\newcommand{\ee}{\end{enumerate}}
\newcommand{\ei}{\end{itemize}}
\title[Sigma-Delta quantization of sub-Gaussian frame expansions]{Sigma-Delta quantization of sub-Gaussian frame expansions and its application to compressed sensing}
\author{
Felix Krahmer%\thanks{Felix Krahmer is with the Hausdorff Center for Mathematics, Universit{\"a}t Bonn, Bonn, Germany,
   %email: {\tt felix.krahmer@hcm.uni-bonn.de}}
   , Rayan Saab%\thanks{ Rayan Saab is with the Department of Mathematics, The University of British Columbia, Vancouver, BC, email: {\tt rayans@math.ubc.ca }}
   , and {\"O}zg{\"u}r Y{\i}lmaz%\thanks{Rachel Ward is with the Courant Institute of Mathematical Science, New York University, New York, NY, USA, email: {\tt rward@cims.nyu.edu}}
}
\begin{document}
\maketitle

\begin{abstract}
  {Suppose that the collection $\{e_i\}_{i=1}^m$ forms a frame for $\R^k$,
  where each entry of the vector $e_i$ is a sub-Gaussian random variable. We consider expansions in such a frame, which are then quantized using a Sigma-Delta scheme.
  We show that an arbitrary signal in $\R^k$ can be recovered from its quantized frame coefficients up to an error which decays root-exponentially
  in the oversampling rate $m/k$. Here the quantization scheme is assumed to be chosen appropriately depending on the oversampling rate and the quantization alphabet can be coarse.
   The result holds with high
  probability on the draw of the frame uniformly for all signals. The crux of the argument is a bound on the
  extreme singular values of the product of a deterministic matrix and
  a sub-Gaussian frame. For fine quantization alphabets, we leverage this bound to show
  polynomial error decay in the context of compressed sensing. Our results extend 
  previous results for
  structured deterministic frame expansions and 
  Gaussian compressed sensing measurements.}
  {compressed sensing, quantization, random frames, root-exponential accuracy, Sigma-Delta,  sub-Gaussian matrices}
  \\
2010 Math Subject Classification: 94A12, 94A20, 41A25, 15B52

\end{abstract}
%%%%%%%%%%%%
%%%%%%%%%%%%
\section{Introduction}

\subsection{ Main problem}
In this paper we address the problem of digitizing, or
quantizing, generalized linear measurements of finite dimensional
signals. In this setting a signal is a vector $x \in \R^N$, and the
acquired measurements are inner products of $x$ with elements from a
collection of $m$ \emph{measurement vectors} in $\R^N$. This
generalized linear measurement model has received much attention
lately, both in the \emph{frame theory} literature where one considers $m\geq
N$,  e.g., \cite{casazza2013finite}, and in the \emph{compressed sensing} literature where $m
\ll N$, e.g., \cite{eldar2012compressed}.
For concreteness, let $\{a_i\}_{i=1}^m\subset \R^N$ denote the
measurement vectors. %  and suppose that they form a frame for $\R^k$,
% i.e., there exist constants $0<A\leq B<\infty$ such that the matrix
% $E\in \R^{m\times k}$ with $E^*:=[e_1 | e_2 | \cdots | e_m]$ satisfies
% $$
% A\|x\|_2^2\leq \|Ex\|_2^2 \leq B\|x\|_2^2
% $$
% for all $x\in \R^k$. Equivalently, the matrix $E$ has its
% non-zero singular values between $A$ and $B$. Accordingly,
The generalized linear measurements are given by $y_i=\langle a_i,
x \rangle, \ i\in \{1,...,m\},$ and can be organized as a
 vector $y$, given by $y=Ax$. Note that here and
throughout the paper, we always consider column vectors. Our goal is
to \emph{quantize} the measurements, i.e., to map the components $y_i$
of $y$ to elements of a fixed finite set so that the measurements can
be stored and transmitted digitally. A natural requirement for such
maps is that they allow for accurate recovery of the underlying
signal. As the domain of the forward map is typically an
uncountable set and its range is finite, an exact inversion is, in
general, not possible. Accordingly we seek an approximate inverse,
which we shall refer to as a \emph{reconstruction scheme} or
\emph{decoder}. More precisely, let $\A\subset \R$ be a finite set,
which we shall refer to as the quantization alphabet, and consider a
set $\mathcal{X}$ of signals, which is compact in $\R^N$.  Then a
\emph{quantization scheme} is a map
$$
{\Q}: A\mathcal{X} \rightarrow \mathcal{A}^m
$$ 
and a \emph{reconstruction scheme} is of the form
$$
\Delta: \mathcal{A}^m \rightarrow \R^N.
$$ 
For $\Delta$ to be an approximate inverse of $\Q$, we need that the
reconstruction error \[\epsilon(x):=\| x- \Delta(\Q(Ax))\|_2\] viewed
as a function of $x$ is as small as possible in some appropriate
norm. Typical choices are $\|\epsilon\|_{L^\infty(\mathcal{X})}$, the
worst case error, and $\|\epsilon\|^2_{L^2(\mathcal{X})}$, the mean
square error.

We are interested in quantization and reconstruction schemes that
yield fast error decay as the number of measurements increases. Next,
we give a brief overview of the literature on quantization in both
frame quantization and compressed sensing settings and explain how
these two settings are intimately connected. Below, we use different
notations for the two settings to make it easier to distinguish between
them. 

\subsection{\bf Finite frame setting}
Let $\{e_i\}_{i=1}^m$ be a frame for $\R^k$, i.e., $m\ge k$, and assume that the
matrix $E$ whose $i$th row is $e_i^T$ has rank $k$, thus the
map $x\mapsto Ex$ is injective. Accordingly, one can reconstruct any
$x\in \R^k$ exactly from the frame coefficients $y=Ex$ using, e.g.,
any left inverse $F$ of $E$. As we explained above, the frame
coefficients $y$ can be considered as generalized measurements of $x$ and
our goal is to quantize $y$ such that the approximation error is
guaranteed to decrease as the number of measurements, i.e., $m$,
increases. 

\subsubsection{Memoryless scalar quantization}
The most naive (and intuitive) quantization method is rounding off
every frame coefficient to the nearest element of the quantizer
alphabet $\A$. This scheme is generally called memoryless scalar
quantization (MSQ) and yields (nearly) optimal performance when $m=k$
and $E$ is an orthonormal basis. However, as redundancy increases (say, we
keep $k$ fixed and increase $m$) MSQ becomes highly suboptimal. In
particular, one can prove that the (expected) approximation error via
MSQ with a given fixed alphabet $\A$ can never decay faster than
$(m/k)^{-1}$ regardless of the reconstruction scheme that is used and
regardless of the underlying $E$ \cite{GVT98}. This is significantly inferior
compared to another family of quantization schemes, called $\sd$
quantizers, where one can have an approximation error that decays like
$(m/k)^{-s}$ for any integer $s>0$, provided one uses an appropiate order
quantizer.  
% Furthermore, noting that the map $\Q$, by definition, needs
% to be implemented on analog hardware, the quantization schemes of
% interest must adhere to certain practical constraints. For example,
% any practicable quantizer needs to be robust with respect to
% unavoidable implementation errors. Also, the map $\Q$ should not use
% analog computations other than a small number of basic arithmetic and
% Boolean operations. A class of quantizers known to be favorable in
% light of these constraints, both when $m\ll N$, i.e., the compressed
% sensing quantization setting and when $m\ge N$, i.e., the frame
% quantization setting, are so-called Sigma-Delta ($\sd$) schemes, which
% will be the quantizers of choice in this paper. A mathematical
% discussion of such schemes will be provided in Section
% \ref{sec:SD}. See also \cite{PSY12} for a detailed description of the
% frame quantization problem and an in-depth discussion of the use of
% Sigma-Delta quantization in this setting.

\subsubsection{Sigma-Delta quantization} Despite its use in the
engineering community since the 1960's as an alternative quantization
scheme for digitizing band-limited signals (see, e.g.,
\cite{inose1963unity}), a rigorous mathematical analysis of $\sd$
quantization was not done until the work of Daubechies and Devore
\cite{daub-dev}. Since then, the mathematical literature on  $\sd$
quantization has grown rapidly. 

Early work on the mathematical theory of $\sd$ quantization has
focused on understanding the reconstruction accuracy as a function of
\emph{oversampling rate} in the context of \emph{bandlimited
  functions}, i.e., functions with compactly supported Fourier
transform. Daubechies and DeVore constructed in their
seminal paper \cite{daub-dev}  \emph{stable} $r$th-order
$\sd$ schemes with a one-bit alphabet. Furthermore, they proved that
when such an $r$th-order scheme is used to quantize an \emph{oversampled} bandlimited
function, the resulting approximation error is bounded by $C_r \lambda^{-r}$
where $\lambda>1$ is the \emph{oversampling ratio} and $C_r$ depends
on the fine properties of the underlying stable $\sd$ schemes.  For
a given oversampling rate $\lambda$ one can then optimize the order $r$ to
minimize the associated \emph{worst-case} approximation error, which,
in the case of the stable $\sd$ family of Daubechies and DeVore,
yields that the approximation error is of order $O(\lambda^{-c\log
  \lambda})$.

In \cite{G-exp}, G{\"u}nt{\"u}rk constructed an alternative infinite
family of $\sd$ quantizers of arbitrary order---refined later by Deift
et al. \cite{DGK10}---and showed that using these new quantizers one
can do significantly better. Specifically, using such
schemes (see Section~\ref{CoSD}) in the bandlimited setting, one
obtains an approximation error of order $O(2^{-c\lambda})$ where
$c=0.07653$ in \cite{G-exp} and $c\approx 0.102$ in
\cite{DGK10}.  In short, \emph{when quantizing bounded bandlimited
  functions}, one gets exponential accuracy in the oversampling rate
$\lambda$ by using these $\sd$ schemes. In other words, one can refine the approximation by increasing
the oversampling rate $\lambda$, i.e., by collecting more
measurements, \emph{exponentially} in the number of measurements without
changing the quantizer resolution. 
Exponential error decay rates are known to be optimal \cite{CD02, G-exp}; lower bounds for the constants $c$ for arbitrary coarse quantization schemes are derived in \cite{KW12}.
In contrast, it again follows from \cite{GVT98} that independently quantizing the sample values at best yields linear decay of the average approximation
error.

Motivated by the observation above that suggests that $\sd$ quantizers
utilize the redundancy of the underlying expansion effectively,
Benedetto et al.~\cite{benedetto2006sigma} showed that
$\sd$ quantization schemes provide a viable quantization scheme for
finite frame expansions in $\R^d$.  In particular,
\cite{benedetto2006sigma} considers $x\in \R^k$ with $\|x\|_2\le 1$ and
shows that the reconstruction error associated with the first-order
$\sd$ quantization (i.e., $r=1$) decays like $\lambda^{-1}$ as
$\lambda$ increases. Here the ``oversampling ratio'' $\lambda$ is
defined as $\lambda=m/k$ if the underlying frame for $\R^k$ consists
of $m$ vectors. This is analogous to the error bound in the case of
bandlimited functions with a first-order $\sd$ quantizer. Following
\cite{benedetto2006sigma}, there have been several results on $\sd$
quantization of finite frames expansions improving on the
$O(\lambda^{-1})$ approximation error by using higher order schemes,
specialized frames, and alternative reconstruction techniques, e.g.,
\cite{BP2,BPY2,LPY,blum:sdf,KSW12}. Two of these papers are of special
interest for the purposes of this paper: Blum et al.~showed in
\cite{blum:sdf} that frames with certain smoothness properties
(including harmonic frames) allow for the $\sd$ reconstruction error
to decay like $\lambda^{-r}$, provided alternative dual
frames---called Sobolev duals---are used for reconstruction. Soon
after, \cite{KSW12} showed that by using higher order $\sd$ schemes
whose order is optimally chosen as a function of the oversampling rate
$\lambda$, one obtains that the worst-case reconstruction error decays
like $e^{-C\sqrt{\frac{m}{k}}}$, at least in the case of two distinct
structured families of frames: harmonic frames and the so-called
\emph{Sobolev self-dual frames} which were constructed in
\cite{KSW12}.  For a more comprehensive review of $\sd$ schemes and
finite frames, see \cite{PSY12}. One of the fundamental contributions
of this paper is to extend these results to wide families of random
frames.

%%%%
\subsubsection{Gaussian frames and quantization} Based on the above
mentioned results, one may surmise that structure and smoothness
are in some way critical properties of frames, needed for good error
decay in $\sd$ quantization. Using the error analysis techniques of
\cite{blum:sdf}, it can be seen, though, that what is critical for
good error decay in $\sd$ quantization is the ``smoothness'' and
``decay'' properties of the \emph{dual frame} that is used in the
reconstruction---see Section~\ref{sderranalysis},
cf. \cite{PSY12}. % {\color{red} OY: I am quite vague here. We could of
  % course expand this line of reasoning, but it would take quite a bit
  % of space. Also, this is probably not the right place to do this --
  % perhaps in Section 2? Or we can just leave it vague...}
This observation is behind the seemingly surprising fact that Gaussian
frames, i.e., frames whose entries are drawn independently according
to $\G(0,1)$, allow for polynomial decay in $\lambda$ of the
reconstruction error \cite[Theorem A]{GLPSY}---specifically, one can
show that the approximation error associated with an $r$th-order $\sd$
scheme is of order $O(\lambda^{-r+\frac{1}{2}})$. The proof relies on
bounding the extreme singular values of the product of powers of a
deterministic matrix---the difference matrix $D$ defined in
Section~\ref{sderranalysis}---and a Gaussian random matrix. This
result holds uniformly with high probability on the draw of the
Gaussian frame. % Also in
% \cite{GLPSY}, it was shown that using Gaussian compressed sensing
% measurement matrices along with $\sd$ quantization schemes, one can
% accurately reconstruct sparse vectors, via a two stage algorithm. In
% this case, the error decays polynomially in the oversampling rate,
% defined here as the ratio of the number of measurements to the
% sparsity of the signal.
% Both results, pertaining to Gaussian frames
% and compressed sensing matrices, are probabilistic in nature but hold
% for all vectors (sparse vectors in the case of compressed sensing),
% with high probability on the draw of the matrix.

\subsection{Compressed sensing setting}
Compressed sensing is a novel paradigm in mathematical signal
processing that was spearheaded by the seminal works of Candes,
Romberg, Tao \cite{CRT05}, and of Donoho \cite{Donoho2006_CS}.
Compressed sensing is based on the observation that various classes of
signals such as audio and images admit approximately sparse
representations with respect to a known basis or frame. Central
results of the theory establish that such signals can be recovered
with high accuracy from a small number of appropriate, non-adaptive
linear measurements by means of computatinally tractable
reconstruction algorithms. % This approach is of particular practical
% relevance, as efficient algorithms that guarantee recovery are
% available.

More precisely, one considers $N$-dimensional, $k$-sparse signals,
i.e., vectors in the set 
$$\Sigma_k^N := \{z \in \R^N, |\rm{supp}(z)|\leq k \}.
$$ 
The generalized measurements are acquired via the $m\times N$ matrix
$\Phi$ where $k<m\ll N$. The goal is to recover $z\in\Sigma_k^N$ from
$y=\Phi z$.  In this paper, we focus on random matrices $\Phi$ with
independent sub-Gaussian entries in the sense of
Definitions~\ref{def:subgvar} and \ref{def:subgmat} below. It is
well-known \cite{ve12-1} that if $m>C k\log(N/k)$, where $C$ is an
absolute constant, with high probability, such a choice of $\Phi$
allows for the recovery of all $z\in\Sigma_k^N$ as the solution $z^\#$
of the $\ell_1$-minimization-problem
\begin{equation*}
 z^\#=\arg\min_{x} \|x\|_1 \quad \text{subject to } y=\Phi x.
 \end{equation*}

\subsubsection{Quantized compressed sensing}
The recovery guarantees for compressed sensing are provably stable
with respect to measurement errors. Consequently, this allows to
incorporate quantization into the theory, albeit naively, as one can
treat the quantization error as noise. The resulting error bounds for
quantized compressed sensing, however, are not satisfactory, mainly
because additional quantized measurements will, in general, not lead
to higher reconstruction accuracy \cite{GLPSY}. % Yet, until recently
% little attention has been focused on how to effectively quantize
% compressed sensing measurements. 
The fairly recent literature on quantization of compressed sensing
measurements mainly investigates two families of quantization methods:
the 1-bit or multibit memoryless scalar quantization (MSQ)
\cite{JHF11, JLBB13, PV11, ALPV13} and $\sd$ quantization of arbitrary
order \cite{GLPSY}.

The results on the MSQ scenario focus on replacing the naive
reconstruction approach outlined above by recovery algorithms that
exploit the structure of the quantization error. For Gaussian random
measurement matrices, it has been shown that approximate
reconstruction is possible via linear programming even if the
measurements are coarsely quantized by just a single bit
\cite{PV11}. For non-Gaussian measurements, counterexamples with
extremely sparse signals exist which show that, in general,
corresponding results do not hold \cite{PV11}. These extreme cases can
be controlled by introducing the $\ell_\infty$-norm of the signal as
an additional parameter, establishing recovery guarantees for
arbitrary sub-Gaussian random measurement matrices, provided that this
norm is not too large \cite{ALPV13}. All these results yield
approximations where the error does not decay faster than
$\lambda^{-1}$, where in this case the oversampling rate $\lambda$ is
defined as the ratio of $m$, the number of measurements, to $k$, the
sparsity level of the underlying signal. Again, it follows from \cite{GVT98} that
for independently quantized measurements, i.e., MSQ, no quantization
scheme and no recovery method can yield an error decay that is faster
than $\lambda^{-1}$.

This bottleneck can be overcome by considering $\sd$ quantizers, which
take into account the representations of previous measurements in each
quantization step \cite{GLPSY}.  The underlying observation is that
the compressed sensing measurements are in fact frame coefficients of
the sparse signal restricted to its support. 
% That is, if the support
% of the original signal is $T$ with $\#T \le k$, the compressed sensing
% measurements $y=\Phi x$ are given by $y=\Phi_T x_T$ where $\Phi_T$ is
% the $m \times k$ matrix formed from the columns of $\Phi$ indexed by
% $T$ and, analogously, $x_T\in \R^k$ is the restriction of $x$ to its
% support. Here without loss of generality, we assume $\#T =
% k$. This, accordingly, is a frame quantization problem as $m>k$.
Accordingly, the problem of quantizing compressed sensing measurements
is a frame quantization problem, even though the ``quantizer'' does
not know what the underlying frame is. This motivates a two-stage
approach for signal recovery:

In a first approximation, the quantization error is just treated as
noise, a standard reconstruction algorithm is applied, and the indices
of the largest coefficients are retained as a support estimate. % The
% support recovery stage is exactly the same for Gaussian and arbitrary
% sub-Gaussian measurements, so we refer the reader to \cite{GLPSY} for
% details and support recovery guarantees.
Once the support has been identified, the measurements carry redundant
information about the signal, which is exploited in the quantization
procedure by applying frame quantization techniques. 

This two-stage approach has been analyzed in \cite{GLPSY} for the specific case of Gaussian random measurement matrices. In particular, it was shown that under mild size assumptions on the non-zero entries of the underlying sparse signals, this approach can be carried through. Consequently, in the case of Gaussian measurement matrices, one obtains that the approximation error associated with an $r$th-order $\sd$ quantizer is of order $O(\lambda^{\alpha(-r+\frac{1}{2}}))$ where $\alpha\in (0,1)$---see \cite[Theorem B]{GLPSY}. These results hold uniformly with high probability on the draw of the Gaussian measurement matrix provided $m \ge C k (\log N)^{1/(1-\alpha)}$.

\subsection{Contributions}
Our work in this paper builds on these results, and generalizes
them. Our contributions are two-fold. On the one hand, we establish
corresponding results in the \emph{compressed sensing setting} which
allow arbitrary, independent, fixed variance \emph{sub-Gaussian} (in
the sense of Definition~\ref{def:subgvar} below) random variables as
measurement matrix entries. In particular, this includes the important
case of Bernoulli matrices, whose entries are renormalized independent
random signs. More precisely, in Theorem~\ref{thm:polynomial} we
prove a refined version of the following.

\begin{theorem}\label{thm:I_intro}
Let $\Phi$ be an $m \times N$ matrix whose entries are appropriately normalized independent sub-Gaussian random variables  and suppose that $\lambda:=m/k \geq \Big(C  \log(eN/k)\Big)^\frac{1}{1-\alpha}$
 where  $\alpha\in(0,1)$. With high probability the $r$-th order $\sd$ reconstruction $\hat{z}$ satisfies
 $$\|z-\hat{z}\|_2 \leq C \lambda^{-\alpha (r-1/2)}\delta,$$ 
 for all $z\in \Sigma_k^{N}$ for which $\min\limits_{j\in
   \rm{supp}{(z)}}|z_j| > C \delta$.  Here $\delta$ is the resolution of the $\sd$ quantization alphabet and $C$ is an appropriate
 constant that depends only on $r$.
\end{theorem}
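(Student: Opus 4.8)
The plan is to run the now–standard two–stage decoder: in the first stage treat the $\sd$ quantization error as measurement noise, recover a coarse estimate by $\ell_1$ minimization and read off its support; in the second stage, on the identified support, treat the measurements as a sub-Gaussian frame expansion and reconstruct with a Sobolev dual. Fix the stable $r$-th order $\sd$ scheme and write its output on $\Phi z$ as $q\in\A^m$. By stability of the scheme (here the fine alphabet is taken large enough to run the scheme stably on the signal class), $\Phi z-q=D^ru$ with $\|u\|_\infty\le c_r\delta$, where $D$ is the difference matrix of Section~\ref{sderranalysis} and $\|D\|_{\mathrm{op}}\le 2$. For the first stage I would feed $q$ to a robust, RIP–based $\ell_1$ decoder for $\tfrac1{\sqrt m}\Phi$; since $m\ge k\bigl(C\log(eN/k)\bigr)^{1/(1-\alpha)}\ge C'k\log(eN/k)$, with high probability $\tfrac1{\sqrt m}\Phi$ has the restricted isometry property of order $2k$ (cf.\ \cite{ve12-1}), and because the noise level is $\tfrac1{\sqrt m}\|D^ru\|_2\le 2^rc_r\delta$ the decoder returns $z^\#$ with $\|z-z^\#\|_2\le C''\delta$. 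If the constant $C$ in the hypothesis $\min_{j\in\mathrm{supp}(z)}|z_j|>C\delta$ is a sufficiently large multiple of $C''$, the $k$ largest–magnitude entries of $z^\#$ lie exactly on a set $\wt T\supseteq\mathrm{supp}(z)$.

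For the second stage, reconstruct $\hat z$ supported on $\wt T$ by the $r$-th order Sobolev dual, $\hat z|_{\wt T}=(D^{-r}\Phi_{\wt T})^\dagger D^{-r}q$. Using $\Phi_{\wt T}z|_{\wt T}=\Phi z$ and $(D^{-r}\Phi_{\wt T})^\dagger(D^{-r}\Phi_{\wt T})=I$ (valid once $D^{-r}\Phi_{\wt T}$ has full column rank), the error telescopes: $z|_{\wt T}-\hat z|_{\wt T}=(D^{-r}\Phi_{\wt T})^\dagger u$, hence $\|z-\hat z\|_2\le\|u\|_2/\sigma_{\min}(D^{-r}\Phi_{\wt T})\le\sqrt m\,c_r\delta/\sigma_{\min}(D^{-r}\Phi_{\wt T})$. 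So the whole theorem reduces to a lower bound on $\sigma_{\min}(D^{-r}\Phi_T)$ that holds with high probability simultaneously over all $|T|\le k$.

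Establishing that bound is the crux, and is where the sub-Gaussian setting genuinely departs from the Gaussian case of \cite{GLPSY}: there, rotation invariance lets one replace $D^{-r}\Phi_T$ by $\Sigma\wt\Phi_T$ with $\wt\Phi_T$ again i.i.d., which is no longer legal. Instead I would invoke the paper's singular–value estimate for the product of a deterministic matrix with a sub-Gaussian frame: for deterministic $B$, with high probability simultaneously over $|T|\le k$ one has $\sigma_{\min}(B\Phi_T)\gtrsim\|B\|_F$ provided $\|B\|_{\mathrm{op}}\sqrt{k\log(eN/k)}\ll\|B\|_F$ (this follows from concentration of $\|B\Phi_Tv\|_2^2$ around $\|B\|_F^2$ via Hanson--Wright, combined with a union bound over an $\varepsilon$-net of $S^{k-1}$ and over the $\binom Nk\le(eN/k)^k$ supports). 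To apply it, write the SVD $D^{-r}=U\Sigma V^T$ with $\sigma_j(D^{-r})\asymp_r(m/j)^r$, fix a truncation level $\ell\asymp k\log(eN/k)$, and take $B$ to be the \emph{balanced band} of $\Sigma V^T$ corresponding to singular directions $\ell/2,\dots,\ell$: then $\sigma_{\min}(D^{-r}\Phi_T)\ge\sigma_{\min}(B\Phi_T)$, while $\|B\|_F\asymp_r\sqrt\ell\,(m/\ell)^r$ and $\|B\|_{\mathrm{op}}\asymp_r(m/\ell)^r$, so the hypothesis of the estimate is exactly $\sqrt{k\log(eN/k)}\ll\sqrt\ell$. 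This yields $\sigma_{\min}(D^{-r}\Phi_T)\gtrsim_r m^r\ell^{1/2-r}$ uniformly over $|T|\le k$ with high probability.

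Combining the pieces, with $\ell\asymp k\log(eN/k)$ and $m=k\lambda$ the powers of $k$ cancel:
\[
\|z-\hat z\|_2\ \lesssim_r\ \frac{\sqrt m}{m^r\ell^{1/2-r}}\,\delta\ =\ \Bigl(\frac{\ell}{m}\Bigr)^{r-1/2}\delta\ \asymp_r\ \Bigl(\frac{\log(eN/k)}{\lambda}\Bigr)^{r-1/2}\delta,
\]
and the measurement hypothesis $\lambda^{1-\alpha}\ge C\log(eN/k)$ gives $\log(eN/k)/\lambda\le C^{-1}\lambda^{-\alpha}$, hence $\|z-\hat z\|_2\lesssim_r\lambda^{-\alpha(r-1/2)}\delta$, as claimed; the failure probability is the sum of the RIP and singular–value failure probabilities, both exponentially small in $k\log(eN/k)$. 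I expect the main obstacle to be the crux singular–value bound, and within it the two interlocking points: restricting $D^{-r}$ to a \emph{balanced} block of singular directions so that Hanson--Wright yields a genuinely $\ell$-exponential tail rather than a useless $O(1)$ one, and tuning $\ell\asymp k\log(eN/k)$ so that the union bound over supports is affordable while the error term $\|B\|_{\mathrm{op}}\sqrt{k\log(eN/k)}$ is still dominated by $\|B\|_F$.
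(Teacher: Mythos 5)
Your proposal is correct and follows the same overall architecture as the paper's proof (of the refined version, Theorem~\ref{thm:polynomial}): a two-stage decoder in which a robust RIP-based recovery identifies the support under the size condition on $\min_j|z_j|$, the second-stage error is reduced to $\|u\|_2/\sigma_{\min}(D^{-r}\Phi_T)$ via the Sobolev dual, and a union bound is taken over the $\binom{N}{k}$ supports, with the hypothesis on $\lambda$ entering precisely to make that union bound affordable. Where you genuinely diverge is inside the crux singular-value estimate (the paper's Proposition~\ref{thm:main_1}). The paper retains the full top block of $\widetilde k$ singular directions of $D^{-r}$ but flattens the singular values there to the smallest one $s_{\widetilde k}$, so that what remains is the isotropic matrix $P_{\widetilde k}V^*E$; you instead retain the balanced band $[\ell/2,\ell]$. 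These are two equivalent devices for the same purpose you correctly identify: forcing the stable rank $\|B\|_{Fr}^2/\|B\|_{\ell_2\to\ell_2}^2$ of the retained block to be of order $\ell$, so that the quadratic-form tail is $e^{-c\ell}$ rather than $e^{-O(1)}$. The second difference is the concentration tool: the paper applies the $\gamma_2$-functional chaos bound of Theorem~\ref{thm:main-tail} (from \cite{krmera12}) to get uniformity over $y\in S^{k-1}$ in one shot, whereas you use Hanson--Wright plus an $\varepsilon$-net and a larger union bound; your route is more elementary and self-contained, at the cost of the routine extra step of controlling $\|B\Phi_T\|_{\ell_2\to\ell_2}$ to pass from the net to the sphere, which you should not omit. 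Finally, you fix the truncation level at $\ell\asymp k\log(eN/k)$ and derive the stronger intermediate bound $\bigl(\log(eN/k)/\lambda\bigr)^{r-1/2}\delta$, then deduce the stated rate; the paper instead tunes its truncation level to $m^{1-\alpha}k^{\alpha}$ so as to land on $\lambda^{-\alpha(r-1/2)}\delta$ directly. Both choices are valid and yield the theorem.
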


 Our second line of contributions is on frame quantization: We show that using appropriate $\sd$ quantization schemes, we obtain root-exponential decay of the reconstruction error with both Gaussian and sub-Gaussian frame entries.  In particular, in Theorem~\ref{thm:root-exponential} we prove a refined version of the following result.  

%In this paper, we improve the results pertaining to Gaussian frames and obtain root-exponential precision. We then extend the results on frames and compressed sensing matrices with Gaussian entries, to allow sub-Gaussian entries such as Bernoulli random variables. In particular, we show that using appropriate $\sd$ quantization schemes, one obtains root-exponential error decay in the case of frames and polynomial error decay in the case of compressed sensing. Root-exponential decay in the compressed sensing case is hindered by the support recovery step of the two stage algorithm in \cite{GLPSY}. %Proving these results combines elements of the proofs in \cite{GLPSY}, \cite{G-exp}, \cite{KSW12} with a probabilistic bound that we derive, pertaining to the extreme singular values of the product of a deterministic matrix (the difference matrix) and a sub-Gaussian frame. This latter result may be of independent interest. More specifically, we present our main results below: 

\begin{theorem}\label{thm:II_intro}
Let $E$ be an $m \times k$ matrix whose entries are appropriately normalized independent sub-Gaussian random variables. Suppose that $\lambda:=m/k$ satisfies $\lambda \geq \lambda_0$, where $\lambda_0$ is a constant independent of $k$ and $m$.  Then with high probability  on the draw of $E$, the corresponding reconstruction $\hat{x}$ from a $\sd$ scheme of appropriate order satisfies 
$$\|x-\hat{x}\|_2  \leq C e^{-c \sqrt{\lambda}}.$$ 
 for all $x$ with $\|x\|_2\leq 1$. Here $c,C$ are appropriate constants.
\end{theorem}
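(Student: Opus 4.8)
The plan is to reduce the statement, via the Sobolev‑dual error identity, to a high‑probability lower bound on the smallest singular value of $D^{-r}E$ (with $D$ the difference matrix of Section~\ref{sderranalysis}), and then to optimize the order $r$. With $E$ appropriately normalized, $\|Ex\|_\infty$ is bounded uniformly over all $x$ with $\|x\|_2\le 1$ with high probability (a union bound over the $m$ sub‑Gaussian rows of $E$), so for any fixed order $r$ one may run a stable $r$‑th order $\sd$ scheme — of the type of \cite{G-exp,DGK10}, with the coarse alphabet $\mathcal A$ — obtaining the state equation $D^r u = Ex - q$ with $\|u\|_\infty\le\mu_r$, where $\mu_r$ is the stability constant of the scheme. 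Reconstructing with the $r$‑th order Sobolev dual $F=(D^{-r}E)^{\dagger}D^{-r}$ and using $FE=I$ together with $D^{-r}D^r=I$ gives $x-\hat x=F(Ex-q)=FD^ru=(D^{-r}E)^{\dagger}u$, hence
$$\|x-\hat x\|_2=\big\|(D^{-r}E)^{\dagger}u\big\|_2\le\frac{\|u\|_2}{\sigma_{\min}(D^{-r}E)}\le\frac{\sqrt m\,\mu_r}{\sigma_{\min}(D^{-r}E)}.$$
All signal dependence has disappeared, so it suffices to bound $\sigma_{\min}(D^{-r}E)$ from below with high probability on the draw of $E$.

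This singular‑value bound — the crux flagged in the abstract — is where the main work lies. Writing the SVD $D^{-r}=U\Sigma V^T$, we have $D^{-r}E=U\Sigma(V^TE)$, and $V^TE$ inherits from $E$ independent, isotropic, sub‑Gaussian \emph{columns} (the columns are $V^T$ applied to the i.i.d.\ columns of $E$). For any truncation level $\ell\in\{k,\dots,m\}$, the diagonal form of $\Sigma$ gives $\sigma_{\min}(\Sigma(V^TE))\ge\sigma_\ell(D^{-r})\,\sigma_{\min}(P_\ell V^TE)$, where $P_\ell$ retains the top $\ell$ coordinates. I would then combine (i) the singular values of the difference operator, which satisfy $\sigma_\ell(D^{-r})\gtrsim (c\,m/\ell)^r$ for $\ell\le m$ with a matching upper bound, and (ii) a smallest‑singular‑value estimate for the tall sub‑Gaussian matrix $P_\ell V^TE$, namely $\sigma_{\min}(P_\ell V^TE)\gtrsim\sqrt\ell$ with exponentially small failure probability once $\ell$ exceeds a suitable multiple of $k$. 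Choosing $\ell$ appropriately then yields, with high probability, $\sigma_{\min}(D^{-r}E)\gtrsim c_r\,\lambda^r\sqrt k$, and therefore $\|x-\hat x\|_2\lesssim (\mu_r/c_r)\,\sqrt\lambda\,\lambda^{-r}$. The obstacle in (ii) is that $V^TE$ has independent columns but \emph{not} independent entries, so the textbook i.i.d.\ estimate does not apply: one must use, and track the constants in, a smallest‑singular‑value bound for matrices with independent isotropic sub‑Gaussian columns (Vershynin type), propagating the dependence on $\ell$, $k$, and the failure probability all the way to the error bound. This replaces the Gaussian rotation‑invariance shortcut of \cite{GLPSY}, and it is what constrains the admissible range of $\ell$, hence of $r$.

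It remains to optimize the order. The bound has the form $\|x-\hat x\|_2\lesssim g(r)\,\lambda^{-r}$ with $g(r)$ collecting $\mu_r$, $c_r^{-1}$, and the $\sqrt\lambda$ factor; since $g$ grows with $r$ while the admissible $r$ is capped (by the stability of the coarse scheme together with the constraint $\ell\le m$ from the previous step), the minimum over $r$ is attained at an order $r=r(\lambda)\asymp\sqrt\lambda$, which gives $\|x-\hat x\|_2\le Ce^{-c\sqrt\lambda}$ — the root‑exponential rate, mirroring the deterministic harmonic‑frame analysis of \cite{KSW12}. Intersecting the high‑probability event of the normalization step with that of the singular‑value step, and verifying that $\lambda_0$ can be chosen independently of $k$ and $m$, completes the proof; the precise constants $c$ and $C$ are tracked in Theorem~\ref{thm:root-exponential}.
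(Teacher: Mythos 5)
Your proposal reproduces the architecture of the paper's actual proof (Theorem~\ref{thm:root-exponential} together with Propositions~\ref{prop:best_known}, \ref{prop:diff_mat}, \ref{thm:main_1}, and \ref{prop:infty_norm_bound}): Sobolev-dual reconstruction reducing the error to $\|u\|_2/\sigma_{\min}(D^{-r}E)$, a high-probability $\ell_2\to\ell_\infty$ bound on $E$ to guarantee stability of the G\"unt\"urk--Deift et al.\ coarse schemes with $\|u\|_\infty\lesssim (Cr)^r\delta$, the SVD truncation $\sigma_{\min}(D^{-r}E)\geq \sigma_\ell(D^{-r})\,\sigma_{\min}(P_\ell V^*E)$, and the optimization $r\asymp\sqrt{\lambda}$ forced by the $r^r$-growth of both the stability constant and the constant in Proposition~\ref{prop:diff_mat} (your parenthetical attributing the cap on $r$ to ``$\ell\le m$'' is not quite the right mechanism, but your formula $g(r)\lambda^{-r}$ with $g$ collecting $\mu_r$ and $c_r^{-1}$ does capture it). The one substantive divergence is the probabilistic lemma for $\sigma_{\min}(P_\ell V^*E)$. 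The paper vectorizes $E$, writes $P_{\widetilde k}V^*Ey=W_y\xi$, and applies the $\gamma_2$-based chaos tail bound of Theorem~\ref{thm:main-tail} uniformly over $y\in S^{k-1}$, with the truncation level $\widetilde k\asymp m^{1-\alpha}k^{\alpha}$ (so $\asymp\sqrt{mk}$ for the root-exponential theorem), which is what yields the failure probability $e^{-c\sqrt{mk}}$. You instead take $\ell\asymp k$ and invoke a Vershynin-type smallest-singular-value bound for matrices with independent isotropic sub-Gaussian columns; this would give root-exponential decay as well (indeed a slightly better per-$r$ rate $\lambda^{-r}$ in place of $\lambda^{-r/2}$), but only with failure probability $e^{-ck}$ rather than $e^{-c\sqrt{mk}}$, and it carries a caveat: the standard independent-columns theorem requires the columns to satisfy $\|A_j\|_2=\sqrt{\ell}$ almost surely, which $P_\ell V^*E$ does not. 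The clean off-the-shelf substitute is a Hanson--Wright bound for the quadratic form $\|P_\ell V^*(Ey)\|_2^2$ (valid since $Ey$ has independent sub-Gaussian entries for fixed $y$) combined with a net over $S^{k-1}$ --- which is, in essence, the cruder version of the chaos estimate the paper uses. With that repair, your argument goes through and proves the stated theorem.
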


Note that a key element of our proof, which may be of independent interest, pertains to the extreme singular values of the product of a deterministic matrix with quickly decaying singular values and a sub-Gaussian matrix, see Proposition~\ref{thm:main_1}.

\begin{remark}
All of the constants in the above theorems can be made explicit. Moreover, the quantization schemes are explicit and tractable, as are the reconstruction algorithms; however, the quantization scheme and reconstruction algorithms are different between Theorems \ref{thm:I_intro}  and \ref{thm:II_intro}. Please see Theorems \ref{thm:root-exponential} and \ref{thm:polynomial} for the full details. 

\end{remark}
%%%%%%%%%%%%
\subsection{Organization}
The remainder of the paper is organized as follows. In Section~\ref{sec:SD} we review $\sd$ quantization and basic error analysis techniques that will be useful in the rest of the paper. In Section \ref{sec:preliminaries} we introduce the concept of a sub-Gaussian random matrix and recall some of its key properties as well as some important probabilistic tools. In Section \ref{sec:frame}, we prove  a probabilistic lower bound on the singular values of the product of the matrix $D^{-r}$, where $r$ is a positive integer and $D$ is a difference matrix,  and  a sub-Gaussian random matrix. Finally,  we use this result in combination with some known results on the properties of various $\sd$ quantization schemes to prove the main theorems. % Finally, in section \ref{} we present some numerical examples that 

\section{Sigma-Delta quantization}\label{sec:SD}
An $r$th order $\sd$ quantizer $Q^{(r)}: \R^m\mapsto \A^m$ maps a sequence of inputs $(y_i)_{i=1}^m$ to a sequence $(q_i)_{i=1}^m$  whose elements take on values from $\A$ 
via the iteration
\begin{align}\label{eq:sd_iterations}
q_i & = Q\left(\rho(u_{i-1},  \cdots, u_{i-r}, y_i, \cdots, y_{i-r+1})\right)\\
(\Delta^r u)_i & = y_i - q_i. \notag
\end{align}
Here $\rho$ is a fixed function known as the {\em quantization rule} and $(u_i)_{i=1}^m$ is a sequence of state variables initialized to zero, i.e., $u_i=0$ for all $i\leq 0$. It is worth noting that designing a a good quantization rule in the case $r>1$ is generally non-trivial, as one seeks stable $\sd$ schemes, i.e., schemes that satisfy
\begin{equation}\label{eq:stable}
\|y\|_{\infty} \leq C_1 \implies \|u\|_{\infty} \leq C_2,
\end{equation}
for constants $C_1$ and $C_2$ that do not depend on $m$ (note that for the remainder of this paper, the constants are numbered in the order of appearance; this allows to refer to constants introduced in previous results and proofs). In particular, stability is difficult to ensure when one works with a coarse quantizer associated with a small alphabet, the extreme case of which is $1$-bit quantization corresponding to $\A=\{\pm1\}.$
%In this work, we will be interested in two types of $\sd$ quantizers, those that work with sufficiently fine alphabets and those that work with all alphabets (coarse alphabets being of special interest). 

In this work we consider two different sets of assumptions. Our results on compressed sensing reconstruction require sufficiently fine alphabets, whereas the results on frame quantization make no assumptions on the size of the alphabet ---in particular, allowing for very coarse alphabets. %.    with frame quantization and compressed sensing quantization. In the former case, we require sufficiently fine alphabets and the latter makes no assumptions on the size of the alphabet. 
 In both cases we will work with the $2L$ level {\em mid-rise} alphabet 
\begin{equation}\A=\Big\{\pm (2 j+ 1)\delta/2, \quad j\in\{0,...,L-1\} \Big\}.\label{eq:mid-rise}\end{equation}
\subsection{Greedy sigma-delta schemes}\label{subsec:greedy}
%In a greedy $\sd$ quantization scheme, each quantized value is chosen to minimize the associated entry of the state vector $u$. This yields the scheme 
We will work with the {\em greedy} $\sd$ quantization schemes
\begin{align}\label{greedy-quant}
q_i & = Q \Big (\sum_{j=1}^r (-1)^{j-1} {r \choose j} u_{i-j} + y_i  \Big)\\
u_i &=  \sum_{j=1}^r (-1)^{j-1} {r \choose j} u_{i-j} +y_i - q_i.\notag
\end{align}
%associated with the $2L$ level {\em mid-rise} alphabet. 
%\begin{equation}\A=\Big\{\pm (2 j+ 1)\delta/2, \quad j\in\{0,...,L-1\} \Big\}.\label{eq:mid-rise}\end{equation}
%For the $2L$ level mid-rise alphabet, this scheme is easily seen (by induction) to be stable when $\|y\|_\infty < (L-2^{r-1} -3/2)$ as one has
It is easily seen by induction that for the $2L$ level mid-rise alphabet and $\|y\|_\infty \leq C$, a sufficient condition for stability is $L \geq 2 \lceil\frac{C}{\delta}\rceil+2^r+1$ as this implies 
\begin{equation} \label{greedy-stab}
\|u\|_\infty \leq  \delta/2. 
\end{equation}
Note that to satisfy this stability condition, the number of levels $L$ must increase with $r$. 
\subsection{Coarse sigma-delta schemes} \label{CoSD}
We are also interested in coarse $\sd$ quantization, i.e., schemes where the alphabet size is fixed. In this case, guaranteeing stability with a smaller alphabet  typically entails a worse (i.e., larger) stability constant. The coarse $\sd$
 schemes that we employ were first proposed by G{\"u}nt{\"u}rk \cite{G-exp} and refined by Deift et al. \cite{DGK10}. Originally designed to obtain exponential accuracy in the setting of bandlimited functions, they were used to obtain root-exponential accuracy in the finite frame setup in \cite{KSW12}. At their core is a change of variables of the form $u=g*v$, where $u$ is as in \eqref{eq:sd_iterations} and $g\in \R^{d+1}$ for some $d\geq r$ (with entries indexed by the set $\{0,\hdots,d\}$) such that $g_0=1$. The quantization rule is then chosen in terms of the new variables as $\rho(v_i, v_{i-1},\dots, y_i)=(h*v)_i +y_i$,  where $h=\delta^{(0)}-\Delta^r g$ with $\delta^{(0)}$ the Kronecker delta. Then \eqref{eq:sd_iterations} reads as 
 
% More concretely, following \cite{KSW12}, one defines these $\sd$ scheme in terms of state variables $v$, defined via  $u=g*v$, where $u$ is as in \eqref{eq:sd_iterations} and $g\in \R^{\{0,\dots,d\}}$ for some $d\geq r$ with $g_0=1$. The quantization rule is then chosen in terms of the new variables as $\rho(v_i, v_{i-1},\dots, y_i)=(h*v)_i +y_i$,  where $h=\delta^{(0)}-\Delta^r g$ with $\delta^{(0)}$ the Kronecker delta and the alphabet is the $2L$-level mid-rise alphabet \eqref{eq:mid-rise} .  Then \eqref{eq:sd_iterations} reads as 
\begin{align}\label{eq:SDrec2}
q_i&=Q((h*v)_i+y_i) \\
v_i&=(h*v)_i+y_i-q_i \notag,
\end{align}
where, again, $Q$ is the scalar quantizer associated with the $2L$ level mid-rise alphabet  \eqref{eq:mid-rise}. By induction, one concludes 

%$$\left.\begin{array}{ccc}\|y\|_\infty &\leq& \mu \\ \|h\|_1\frac{\delta}{2} + \mu &\leq& L\delta \end{array}\right\}\implies \|v\|_\infty \leq \frac{\delta}{2} \implies \|u\|_\infty \leq \|g\|_1\frac{\delta}{2}, $$
$$ \|h\|_1\frac{\delta}{2} + \|y\|_\infty \leq L\delta \implies \|v\|_\infty \leq \frac{\delta}{2} \implies \|u\|_\infty \leq \|g\|_1\frac{\delta}{2}, $$
i.e., a sufficient condition to guarantee stability for all bounded inputs $\|y\|_\infty \leq \mu$ is \begin{equation}\|h\|_1 \leq 2L-\frac{2\mu}{\delta}\label{eq:h_cond}.\end{equation} Thus, one is interested in choosing $g$ with minimal $\|g\|_1$ subject to $h=\delta^{(0)}-\Delta^r g$ and \eqref{eq:h_cond}. This problem was studied in \cite{G-exp, DGK10}
% where in the interest of tractability, an additional constraint that $h$ has only $r$ non-zero components is imposed. This then leads to $g$ which satisfies
%\begin{equation}\label{eq:g1}
%\|g\|_1 \leq c C^r r^r
%\end{equation}
%for constants $c$ and $C$ that depend on $ \delta, \mu$. These constants were then refined in  \cite{DGK10}
 leading to the following proposition (cf. \cite{KSW12}).
\begin{proposition}\label{prop:best_known}
There exists a universal constant $C_3>0$ such that for any midrise quantization alphabet $\A=\A^\delta_L$, for any order $r\in\N$, and for all $\mu<\delta\left(L-\frac{1}{2}\right)$, there exists $g\in\R^{d+1}$ for some $d\geq r$ such that the $\Sigma\Delta$ scheme given in \eqref{eq:SDrec2} is stable for all input signals $y$ with $\|y\|_\infty\leq \mu$ and
\begin{equation}\label{eq:best_known}
\|u\|_\infty \leq c C_3^r r^r \frac{\delta}{2},
\end{equation}
where $u=g*v$ as above, $C_3=\left(\left\lceil \frac{\pi^2}{(\cosh^{-1} \gamma)^2} \right\rceil \frac{e}{\pi}\right)$ and $\gamma:=2L-\frac{2\mu}{\delta}$.
\end{proposition}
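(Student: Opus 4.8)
The plan is to reduce \eqref{eq:best_known} to a deterministic filter-design problem and then to invoke the extremal construction of G\"unt\"urk \cite{G-exp}, as sharpened by Deift, G\"unt\"urk and Krahmer \cite{DGK10} (see also \cite{KSW12}).

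First I would note that the chain of implications preceding the statement already isolates exactly what must be built. If $g\in\R^{d+1}$ for some $d\geq r$ has $g_0=1$ and the residual filter $h:=\delta^{(0)}-\Delta^r g$ obeys $\|h\|_1\leq \gamma$, then for every input $y$ with $\|y\|_\infty\leq\mu$ one has $\|h\|_1\tfrac{\delta}{2}+\|y\|_\infty\leq \gamma\tfrac{\delta}{2}+\mu=L\delta$, so the scheme \eqref{eq:SDrec2} is well defined and stable with $\|v\|_\infty\leq\delta/2$ and therefore $\|u\|_\infty\leq\|g\|_1\,\delta/2$. The hypothesis $\mu<\delta(L-\tfrac12)$ is used twice here: it makes $\gamma=2L-\tfrac{2\mu}{\delta}>1$, so that $\cosh^{-1}\gamma>0$ and $C_3$ is meaningful, and it is precisely the margin that yields $\gamma\tfrac{\delta}{2}+\mu\leq L\delta$. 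Thus it suffices to exhibit such a $g$ with $\|g\|_1\leq c\,C_3^{\,r}r^r$; then \eqref{eq:best_known} is immediate.

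Next I would pass to generating functions. Writing $G(z)=\sum_{i=0}^{d}g_i z^i$ and using that $\Delta^r$ acts as multiplication by $(1-z)^r$, the task becomes: find a polynomial $G$ with $G(0)=1$ such that the coefficient sequence of $H(z):=1-(1-z)^r G(z)$ has $\ell_1$-norm at most $\gamma$, with $\|G\|_1=\sum_i|g_i|$ as small as possible; equivalently, $1-H$ must vanish to order $r$ at $z=1$, and the cost $\|G\|_1$ measures how sharply concentrated near $z=1$ one can make $1-H$. Here I would take $G$ to be the filter of \cite{G-exp,DGK10}: with $\nu:=\big\lceil \pi^2/(\cosh^{-1}\gamma)^2\big\rceil$ one sets $d=\nu r$ and chooses $1-H$ of the form of an $r$-fold product over a Chebyshev grid of $\nu$ nodes, normalized so that $(1-z)^r$ divides it. One then checks, in order: (i) $1-H$ is divisible by $(1-z)^r$, so $G$ is a genuine polynomial of degree $\nu r$; (ii) the value of $\nu$ is chosen precisely so that the relevant Chebyshev polynomial reaches the level $\gamma$ — via $T_\nu(\cosh\theta)=\cosh(\nu\theta)$ — which forces $\|H\|_1\leq\gamma$ and hence stability; and (iii) a contour-integral estimate for $g_i=\tfrac{1}{2\pi i}\oint G(z)z^{-i-1}\,dz$ that exploits the order-$r$ pole at $z=1$ gives $\|G\|_1\leq c\,(\nu e/\pi)^r r^r=c\,C_3^{\,r}r^r$, where the factor $r^r$ comes from the $r$-fold pole (Stirling) and $C_3^{\,r}=(\nu e/\pi)^r$ from the per-order contribution. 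Combining (i)--(iii) with the reduction of the previous paragraph proves the proposition.

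The routine parts are the reduction and the generating-function bookkeeping; the genuinely hard step is (iii), the coefficient bound $\|g\|_1\leq c\,C_3^{\,r}r^r$ for the extremal filter. The delicate point is uniformity in the coarseness of the alphabet: when $\gamma$ is only slightly above $1$, $\nu=\big\lceil\pi^2/(\cosh^{-1}\gamma)^2\big\rceil$ is large, and one must track precisely how $\nu$, hence $C_3$, propagates into the estimate. That analysis is carried out in \cite{G-exp,DGK10}, and the proposition is its packaging in the form needed for the applications that follow.
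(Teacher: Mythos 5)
Your reduction to the filter-design problem (stability from $\|h\|_1\leq\gamma$ via the chain $\|h\|_1\tfrac{\delta}{2}+\|y\|_\infty\leq L\delta \implies \|v\|_\infty\leq\tfrac{\delta}{2} \implies \|u\|_\infty\leq\|g\|_1\tfrac{\delta}{2}$, so that only $\|g\|_1\leq c\,C_3^r r^r$ remains to be shown) is exactly the argument the paper gives in the text preceding the proposition, and like the paper you then delegate the extremal construction of $g$ to \cite{G-exp,DGK10} (cf.\ \cite{KSW12}). This matches the paper's treatment, which offers no independent proof beyond this reduction and citation, so your proposal is correct and takes essentially the same route.
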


\subsection{Sigma-Delta error analysis}\label{sderranalysis}
As above, assume that $x\in \R^k$ and the frame matrix $E\in\R^{m\times k}$. If the vector of frame coefficients $y=Ex \in \R^m$ is $\sd$ quantized to yield the vector $q\in\A^m$, then linear reconstruction of $x$ from $q$ using some dual frame $F$ of $E$ (i.e., $FE=I$) produces the estimate $\hat{x}:=Fq$. We would like to control the reconstruction error $\eta:=x-\hat{x}$. Writing the state variable equations \eqref{eq:sd_iterations} in vector form, we have
\begin{equation}
D^r u = y - q, 
\end{equation}
 where $D$ is the $m\times m$ difference matrix with entries given by
\begin{equation} D_{ij} = \left\{ 
  \begin{array}{l l}
    1 & \quad i=j\\
-1 &\quad i=j+1\\
  0 & \quad \text{otherwise}
  \end{array} \right..
\end{equation}
Thus,  \begin{equation} \eta= x- Fq = F(y-q) = FD^r u. \end{equation} %Consequently, controlling the norm of the  reconstruction error $\|\eta\|_2$ is equivalent to controlling the term $\|FD^r u\|_2$. Moreover, 
Working with with stable $\sd$ schemes, one can control $\|u\|_2$ via $\|u\|_\infty$. Thus, it remains to bound the operator norm $\|FD^r\|:=\|FD^r\|_{\ell_2^m\mapsto \ell_2^k}$ and a natural choice for $F$ is 
\begin{equation}F:=\arg\min\limits_{G:GE=I}\|GD^r\| = (D^{-r}E)^\dagger D^{-r}.\label{eq:sob_dual}\end{equation}
This so-called Sobolev dual frame was first proposed in \cite{blum:sdf}.
Here $A^\dagger:= (A^*A)^{-1}A^*$ is the $k\times m$ Moore-Penrose (left) inverse of the $m\times k$ matrix $A$.  Since \eqref{eq:sob_dual} implies that $FD^r=(D^{-r}E)^\dagger,$  the singular values of $D^{-r}E$ will play a key role in this paper. 

We begin by presenting some important properties of the matrix $D$.  The following proposition is a quantitative version of Proposition 3.1 of \cite{GLPSY}. 
\begin{proposition}\label{prop:diff_mat}
The singular values of the matrix $D^{-r}$ satisfy 
$$ \frac{1}{(3\pi r)^r}\left(\frac{m}{j}\right)^r\leq \sigma_j(D^{-r})\leq {(6r)^r}\left(\frac{m}{j}\right)^r.$$
\end{proposition}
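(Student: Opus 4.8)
The plan is to analyze $D^{-r}$ by first understanding $D^{-1}$ explicitly and then studying $D^{-r} = (D^{-1})^r$. Since $D$ is the lower-bidiagonal difference matrix with $1$'s on the diagonal and $-1$'s on the first subdiagonal, its inverse $D^{-1}$ is the lower-triangular matrix of all $1$'s (the ``summation'' or partial-sum operator), and more generally $(D^{-r})_{ij} = \binom{i-j+r-1}{r-1}$ for $i \ge j$ and $0$ otherwise. The key point is that the singular values of $D^{-r}$ are governed by the eigenvalues of $D^{-r}(D^{-r})^* = (D^r(D^*)^r)^{-1} = (D^*D)^{-r}$ up to the non-normality issue — more precisely $\sigma_j(D^{-r})^2 = \lambda_j((D^*D)^{-r})$ only if $D^r$ and $(D^*)^r$ commuted, which they do not, so instead I would work with $\sigma_j(D^{-r})^{-1} = \sigma_{m-j+1}(D^r)$ and bound $\sigma_j(D^r)$ from above and below.

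The cleanest route is to use the singular values of $D$ itself, which are classically known: $D^*D$ (or $DD^*$) is a perturbation of the standard tridiagonal second-difference matrix, with eigenvalues $\sigma_j(D)^2 = 2 - 2\cos\left(\frac{(2j-1)\pi}{2m+1}\right) = 4\sin^2\left(\frac{(2j-1)\pi}{2(2m+1)}\right)$, so that $\sigma_j(D) = 2\sin\left(\frac{(2j-1)\pi}{2(2m+1)}\right)$ — I would cite or quickly verify this (it appears in \cite{GLPSY}). Then using the singular value submultiplicativity/interlacing-type inequalities: $\sigma_{j}(D^r) \le \sigma_{j - (r-1)s}(D)^{?}$ type bounds are delicate because $D$ is not normal, so instead the robust approach is the one used in \cite{GLPSY}: establish $\|D^{-r} P_\ell\| $-type estimates, or directly bound $\sigma_j(D^{-r})$ via $\sigma_j(D^{-r}) \le \|D^{-r}\|_{2\to 2}$ localized to the bottom-right or via the explicit entries. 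Concretely, for the upper bound I would bound $\sigma_j(D^{-r})$ using $\sigma_j(D^{-r}) \le \sigma_j(D^{-1})\,\|D^{-1}\|^{r-1}$ is too lossy; better is $\sigma_j(D^{-r}) = \sigma_{m-j+1}(D^r)^{-1}$ together with the fact that for any matrices $\sigma_{a+b-1}(AB) \le \sigma_a(A)\sigma_b(B)$, applied iteratively with $A = B = D$ and indices chosen so that $\sigma_{m-j+1}(D^r) \ge \prod$ of appropriately spaced singular values of $D$, which after using $\sigma_i(D) \asymp i/m$ gives $\sigma_{m-j+1}(D^r) \gtrsim (j/m)^r / (\text{combinatorial factor})$, hence the claimed $\sigma_j(D^{-r}) \le (6r)^r (m/j)^r$. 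The lower bound is symmetric, using $\sigma_{a+b-1}(AB) \ge \sigma_{a}(A)\sigma_{m}(B)$ or the dual inequality to get $\sigma_{m-j+1}(D^r) \lesssim (3\pi r)^r (j/m)^r$.

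I expect the main obstacle to be controlling the combinatorial constants $(3\pi r)^{-r}$ and $(6r)^r$ sharply enough, i.e. getting the $r$-dependence right rather than just $\asymp_r (m/j)^r$. This requires being careful with how the singular-value product inequalities distribute the index budget: when writing $\sigma_{m-j+1}(D^r)$ as bounded below by a product $\prod_{i=1}^{r}\sigma_{n_i}(D)$ with $\sum(n_i - 1) = m-j+1-1$, one wants to choose the $n_i$ roughly equal to $(m-j)/r$ so each factor is $\asymp \frac{(m-j)/r}{m} \asymp \frac{1}{r}\cdot\frac{m-j}{m}$, and then handle the regime $j$ close to $m$ separately since $(m-j)/m$ can be small. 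Using $\sin\theta \ge \frac{2}{\pi}\theta$ on $[0,\pi/2]$ and $\sin\theta \le \theta$ converts the trigonometric expressions into the stated polynomial rates, and tracking the factors $\pi$, $2$, and the $r^r$ from the equal-index splitting produces precisely the constants $3\pi r$ and $6r$. The rest is routine: assembling the two-sided bound on $\sigma_{m-j+1}(D^r)$ and inverting.
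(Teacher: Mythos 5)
Your upper bound goes through, and by a genuinely different route than the paper's: iterating Horn's multiplicative inequality $\sigma_{a+b-1}(AB)\le\sigma_a(A)\sigma_b(B)$ and its dual $\sigma_{a+b-m}(AB)\ge\sigma_a(A)\sigma_b(B)$ with the index budget split equally among the $r$ factors does yield $\sigma_j(D^{-r})\le (Cr)^r(m/j)^r$. The genuine gap is in the lower bound, which is the half the paper actually needs (it feeds the hypothesis $s_j\ge C_6^r(m/j)^r$ of Proposition~\ref{thm:main_1}), and the claimed ``symmetry'' between the two halves is illusory. To get $\sigma_j(D^{-r})\ge c_r(m/j)^r$ you must prove $\sigma_{m-j+1}(D^r)\le c_r^{-1}(j/m)^r$, an \emph{upper} bound on a small singular value of $D^r$, and here the multiplicative inequalities are structurally too weak. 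Horn's inequality $\sigma_{i_1+\cdots+i_r-(r-1)}(D^r)\le\prod_{l}\sigma_{i_l}(D)$ forces $\sum_l(i_l-1)=m-j$; since $\sigma_i(D)\asymp (m-i+1)/m$, the right-hand side is minimized by putting the entire budget on one index ($i_1=m-j+1$, $i_2=\cdots=i_r=1$), which gives only $\sigma_{m-j+1}(D^r)\lesssim 2^{r-1}\,j/m$ --- a single power of $j/m$, not the $r$-th power; equal splitting is even worse, since then every factor is $\Theta(1)$. The same obstruction appears if you run the dual inequality on $D^{-r}=(D^{-1})^r$ directly: the constraint $\sum_l i_l=j+(r-1)m$ with $i_l\le m$ allows at most one index to be as small as $j$, so you recover only $\sigma_j(D^{-r})\gtrsim \pi^{-r}\,m/j$. (Also, the inequality you quote in the form $\sigma_{a+b-1}(AB)\ge\sigma_a(A)\sigma_m(B)$ is not the correct dual.)

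The missing idea --- the crux of the paper's proof, imported from \cite{GLPSY} --- is \emph{additive} rather than multiplicative: $DD^*-D^*D$ has rank $2$, so $D^r(D^r)^*$ differs from $(DD^*)^r$ by a perturbation of rank $O(r)$, and Weyl's eigenvalue inequalities \cite{hj} for Hermitian matrices under such a perturbation convert the non-normality loss into an \emph{index shift}, $\left(\sigma_{\min(j+2r,m)}(D^{-1})\right)^r\le\sigma_j(D^{-r})\le\left(\sigma_{\max(j-2r,1)}(D^{-1})\right)^r$, instead of a multiplicative degradation. The constant $(3\pi r)^{-r}$ then comes from $j+2r\le 3rj$ combined with $\sigma_i(D^{-1})\ge\frac{1}{\pi}\frac{m+1/2}{i-1/2}$ --- not, as you conjectured, from an equal splitting of an index budget. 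Without the low-rank-perturbation observation the lower bound does not close.
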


\begin{proof} Note that (see, e.g., \cite{GLPSY}) $$ \frac{1}{\pi}\left(\frac{m+1/2}{j-1/2}\right)\leq \sigma_j(D^{-1})\leq \frac{1}{2}\left(\frac{m+1/2}{j-1/2}\right).$$ 
Moreover, by Weyl's inequalities \cite{hj} on the singular values of Hermitian matrices, it holds that (see \cite{GLPSY} for the full argument)
$$ \left(\sigma_{\min(j+2r,m)}(D^{-1})\right)^r \leq \sigma_j(D^{-r}) \leq \left(\sigma_{\max(j-2r,1)}(D^{-1})\right)^r.$$
Combining the above inequalities, we obtain 
$$ \frac{1}{\pi^r}\frac{(m+1/2)^r}{\min(j+2r-1/2,m-1/2)^r}\leq \sigma_j(D^{-r})\leq \frac{1}{2^r}\frac{(m+1/2)^r}{\max(j-2r-1/2,1/2)^r}.$$
Observing that $$\min(j+2r-1/2,m-1/2) \leq~j+2r \leq 3rj$$ for $r,j \in \Z^+$ establishes the lower bound. 

For the upper bound, note that $j-2r-1/2 \geq  (2r+1)^{-1} j/2,$ for $j \geq 2r+1$ and  $1/2 \geq (2r+1)^{-1}j/2$ for $j \in \{ 1,..., 2r \}$. Thus, $$\frac{(m+1/2)^r}{2^r \max{(j-2r-1/2,1/2)^r}} \leq \frac{2^r m^r} {2^r (4r+2)^{-r} j^r} \leq (6r)^r \left(\frac{m}{j}\right)^r. $$ \end{proof}

\section{Sub-Gaussian random matrices}\label{sec:preliminaries}
Here and throughout,  $x\sim \mathcal{D}$ denotes that the random variable $x$ is drawn according to a distribution $\mathcal{D}$. Furthermore, $\G(0,\sigma^2)$ denotes the zero-mean Gaussian distribution with variance $\sigma^2$. 
The following definition provides a means to compare the tail decay of two distributions. 
\begin{definition}
If two random variables $\eta\sim\mathcal{D}_1$ and $\xi\sim\mathcal{D}_2$ satisfy $P(|\eta|>t) \leq K P(|\xi|>t)$ for some constant $K$ and all $t\geq0$,  then we say that $\eta$ is $K$-dominated by $\xi$ (or, alternatively, by $\mathcal{D}_2$).  \end{definition}\begin{definition}\label{def:subgvar} A random variable is sub-Gaussian with parameter $c>0$ if it is $e$-dominated by $\G(0,c^2)$.
\end{definition}

\begin{remark}
One can also define sub-Gaussian random variables via their moments or, in case of zero mean, their moment generating functions. See \cite{ve12-1} for a proof that all these definitions are equivalent. 
\end{remark}
\begin{remark}
Examples of sub-Gaussian random variables include Gaussian random variables, all bounded random variables (such as Bernoulli), and their linear combinations. 
\end{remark}

\begin{definition} \label{def:subgmat}
We say that a matrix $E$ is sub-Gaussian with parameter $c$, mean $\mu$ and variance $\sigma^2$ if its entries are independent sub-Gaussian random variables  with mean $\mu$, variance $\sigma^2$, and parameter $c$.\end{definition}
%%%%%%%%%%%%
%\subsection{The contraction principle}\label{subsec:contraction_principle}
The contraction principle (see, for example, Lemma 4.6 of \cite{LedouxTalagrand91book}) will allow us to derive estimates for sub-Gaussian random variables via the corresponding results for Gaussians. 
\begin{lemma}[Contraction Principle]\label{lem:contraction}
Let $G:\R^+ \mapsto \R^+$ be a non-decreasing convex function and let $(\eta_i)$ and $(\xi_i)$ be two finite symmetric sequences of random variables such that there exists a constant $K\geq 1$ such that for each $i$, $\eta_i$ is $K$-dominated by $\xi_i$.  Then, for any finite sequence $(x_i)$ in a Banach space equipped with a norm $\|\cdot \|$ we have
$$\E G(\|\sum_i \eta_i x_i\|) \leq \E G(K\|\sum_i \xi_i x_i\|).$$ 
\end{lemma}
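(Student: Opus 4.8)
The plan is to transform $\sum_i \eta_i x_i$ into $\sum_i (K\xi_i)x_i = K\sum_i \xi_i x_i$ by swapping the summands one at a time. Since the contraction principle is inherently a statement about sums of \emph{independent} symmetric random variables (which is the case in all our applications), I assume each of the two given sequences is internally independent, and I realise them on a common probability space so that, in addition, the families $\{\eta_i\}$ and $\{\xi_i\}$ are independent of one another; this affects neither expectation in the statement. The whole argument then rests on a single one-summand comparison: if $Z$ is a symmetric random vector, $\eta$ is a symmetric scalar random variable that is $K$-dominated by a symmetric scalar $\xi$, with $(\eta,\xi)$ independent of $Z$, and $x$ is a fixed vector, then $\E G(\|Z+\eta x\|)\le\E G(\|Z+K\xi x\|)$. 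Applying this at step $j$ with $Z=\sum_{i<j}(K\xi_i)x_i+\sum_{i>j}\eta_i x_i$ --- which is symmetric and independent of $(\eta_j,\xi_j)$ --- and telescoping over $j$ gives the lemma.

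To prove the one-summand comparison I would condition on $Z=z$ and use symmetry of $\eta$ and $\xi$. Setting $\psi_z(s):=\tfrac12\bigl(G(\|z+sx\|)+G(\|z-sx\|)\bigr)$, symmetry gives $\E[G(\|z+\eta x\|)]=\E\,\psi_z(|\eta|)$ and $\E[G(\|z+K\xi x\|)]=\E\,\psi_z(K|\xi|)$. Extended evenly to $\R$, the function $\psi_z$ is convex, being an average of the two convex maps $s\mapsto G(\|z\pm sx\|)$ (each the composition of the convex nondecreasing $G$ with an affine map followed by the norm), and an even convex function on $\R$ is nondecreasing on $[0,\infty)$. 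Hence $\psi_z$ is convex, nondecreasing and nonnegative on $\R^+$, and the claim reduces to the scalar inequality: for every convex nondecreasing $\psi:\R^+\to\R^+$, $K$-domination of $\eta$ by $\xi$ implies $\E\psi(|\eta|)\le\E\psi(K|\xi|)$.

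The scalar inequality follows by pairing the layer-cake formula with convexity. Layer cake together with $d\psi\ge 0$ and the domination hypothesis give $\E\psi(|\eta|)=\psi(0)+\int_0^\infty\P(|\eta|>t)\,d\psi(t)\le\psi(0)+K\int_0^\infty\P(|\xi|>t)\,d\psi(t)=K\,\E\psi(|\xi|)-(K-1)\psi(0)$. Separately, convexity of $\psi$ and $\psi\ge 0$ give, for every $t\ge 0$, $\psi(t)\le\tfrac1K\psi(Kt)+(1-\tfrac1K)\psi(0)$, i.e.\ $K\psi(t)-(K-1)\psi(0)\le\psi(Kt)$; taking the expectation at $t=|\xi|$ yields $K\,\E\psi(|\xi|)-(K-1)\psi(0)\le\E\psi(K|\xi|)$. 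Chaining the two estimates proves $\E\psi(|\eta|)\le\E\psi(K|\xi|)$, hence the one-summand comparison, hence the lemma.

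The step I expect to be the crux is the reduction itself: one must insert $K\xi_i$ (not $\xi_i$) at each swap --- this is precisely what makes the factor $K$ appear \emph{outside} the norm in the conclusion rather than inside every term --- and one must be slightly careful that, although a single $|\eta_i|$ need not be stochastically dominated by $K|\xi_i|$, localising to one coordinate and then symmetrising reduces everything to the scalar statement $\E\psi(|\eta|)\le\E\psi(K|\xi|)$, where tail domination and convexity of $\psi$ together do suffice. Everything else is the routine interplay of layer cake and convexity of $G$ composed with the norm.
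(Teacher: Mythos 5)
Your proof is correct, and it is essentially the standard argument for this lemma: the paper itself gives no proof, quoting the statement from Lemma 4.6 of Ledoux--Talagrand, and your coordinate-by-coordinate swap, the symmetrized function $\psi_z$, and the layer-cake-plus-convexity chain $\E\psi(|\eta|)\leq K\,\E\psi(|\xi|)-(K-1)\psi(0)\leq\E\psi(K|\xi|)$ reproduce that source's proof. One remark: you were right to insert the independence hypothesis explicitly --- without independence within each sequence the statement as literally written is false (e.g.\ $\eta_i=\varepsilon_i R$ with a common random radius $R$ versus independent copies $\xi_i=\varepsilon_i R_i$ gives $\E(\eta_1+\eta_2)^4>\E(\xi_1+\xi_2)^4$ for nonconstant $R$), so the paper's phrasing ``symmetric sequences'' should be read as independent symmetric random variables, which is indeed the setting of the cited lemma and of every application in this paper.
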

While the contraction principle as well as the following chaos estimate are formulated for random vectors, we mainly work with random matrices. Thus, it is convenient to ``vectorize" the matrices: for  a matrix $A$, we denote by $\vec{A}$ the vector formed by stacking its columns into a single column vector. 

To state the more refined chaos estimate, we need the concept of the Talagrand $\gamma_2$-functional (see, e.g., \cite{LedouxTalagrand91book} for more details).
\begin{definition} \label{def:gamma-2}
For a metric space $(T,d)$, an {\it admissible sequence} of $T$ is a
collection of subsets of $T$, $\{T_r : r \geq 0\}$, such that for
every $s \geq 1$, $|T_r| \leq 2^{2^r}$ and $|T_0|=1$. The $\gamma_2$ functional is defined by
\begin{equation*}
\gamma_2(T,d) =\inf \sup_{t \in T} \sum_{r=0}^\infty
2^{r/2}d(t,T_r),
\end{equation*}
where the infimum is taken with respect to all admissible sequences
of $T$.
\end{definition}
Furthermore, for $\mathcal A$ a set of matrices, we denote by $d_{Fr}(\mathcal A):=\sup\limits_{A\in\mathcal A} \|A\|_{Fr}$ and $d_{\ell_2\rightarrow \ell_2}(\mathcal A):= \sup\limits_{A\in\mathcal A}\|A\|_{\ell_2\rightarrow \ell_2}$ the diameter in the Frobenius norm $\|\cdot\|_{Fr}$ and the spectral norm $\|\cdot\|_{\ell_2\rightarrow \ell_2}$, respectively. Here the Frobenius norm  is given by $\|A\|_{Fr}=\|\vec A\|_{2}$ and  the spectral norm is given by $\|A\|_{\ell_2\rightarrow \ell_2}=\sup\limits_{\|x\|_{2}=1} \|Ax\|_{2}$. The following theorem is a slightly less general version of \cite[Thm. 3.1]{krmera12}.

\begin{theorem} \label{thm:main-tail}
Let ${\mathcal A}$ be a symmetric set of matrices, that is, $\mathcal A=-\mathcal A$, and let $\xi$ be a random vector whose entries $\xi_j$ are independent, sub-Gaussian random variables of mean zero, variance one, and parameter $c$. Set

\begin{align*}
\mu &=\gamma_2({\mathcal A},\| \cdot \|_{\ell_2 \to \ell_2}) \left( \gamma_2({\mathcal A},\| \cdot \|_{\ell_2 \to \ell_2})
+d_{Fr}({\mathcal A})\right),\notag\\
\nu_1 &=d_{\ell_2 \to \ell_2}({\mathcal A})(\gamma_2({\mathcal A},\| \cdot \|_{\ell_2 \to \ell_2})+d_{Fr}({\mathcal A})), \ \ {\rm and} \ \ \nu_2=d_{\ell_2 \to \ell_2}^2({\mathcal A}).
\end{align*}
Then, for $t>0$,
\begin{equation*}
\P \left( \sup_{A \in {\mathcal A}} \left| \|A {\xi}\|_2^2 - \E \|A\xi\|_2^2 \right|\geq C_4 \mu + t \right) \leq 2 \exp\left(-C_5 \min\left\{\frac{t^2}{\nu_1^2},\frac{t}{\nu_2}\right\}\right).
\end{equation*}
The constants $C_4, C_5$ depend only on $c$. 
\end{theorem}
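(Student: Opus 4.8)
The plan is to recognize $\sup_{A\in\mathcal A}\big|\,\|A\xi\|_2^2-\E\|A\xi\|_2^2\,\big|$ as the supremum of a homogeneous chaos process of order two in $\xi$ and to control it by generic chaining (Talagrand's $\gamma_2$ machinery) combined with Bernstein- and Hanson--Wright-type increment estimates. Since the statement is only a mild specialization of a known result, in the paper I would simply invoke that result; here I sketch how its proof goes. First I would pass to chaos form: because the $\xi_j$ are mean zero and variance one, $\|A\xi\|_2^2=\langle\xi,A^\top A\,\xi\rangle$ and $\E\|A\xi\|_2^2=\operatorname{tr}(A^\top A)=\|A\|_{Fr}^2$, so the object to bound is $\sup_{A\in\mathcal A}\big|\langle\xi,A^\top A\,\xi\rangle-\operatorname{tr}(A^\top A)\big|$. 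Splitting $A^\top A$ into diagonal and off-diagonal parts, the diagonal contribution $\sum_j(A^\top A)_{jj}(\xi_j^2-1)$ is a sum of independent centered sub-exponential random variables; for each fixed $A$ Bernstein's inequality gives a mixed tail with parameters $\|\diag(A^\top A)\|_2\le\|A\|_{Fr}\,\|A\|_{\ell_2\to\ell_2}$ and $\max_j(A^\top A)_{jj}\le\|A\|_{\ell_2\to\ell_2}^2$, already matching $\nu_1$ and $\nu_2$, and the supremum over $\mathcal A$ is obtained by incorporating this into the chaining below.

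For the off-diagonal part I would decouple, at the cost of universal constants, replacing one copy of $\xi$ by an independent copy $\xi'$, so that conditionally on $\xi'$ the map $A\mapsto\langle\xi,(A^\top A)\xi'\rangle$ is a sub-Gaussian process in $\xi$ with increment pseudometric $A,B\mapsto\|(A^\top A-B^\top B)\xi'\|_2$. Chaining in $\xi$, integrating over $\xi'$, and chaining once more reduces the supremum to a two-level $\gamma_2$-type quantity. The algebraic heart is to bound the increments by the two scales that actually appear in the statement: $\big|\,\|A\xi\|_2-\|B\xi\|_2\,\big|\le\|(A-B)\xi\|_2$, and by Hanson--Wright $\|(A-B)\xi\|_2$ concentrates around $\|A-B\|_{Fr}$ with sub-Gaussian fluctuations at scale $\|A-B\|_{\ell_2\to\ell_2}$. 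Feeding this into the chaining sums and collecting terms yields exactly the three quantities in the statement: the deterministic term $\mu=\gamma_2(\mathcal A,\|\cdot\|_{\ell_2\to\ell_2})\big(\gamma_2(\mathcal A,\|\cdot\|_{\ell_2\to\ell_2})+d_{Fr}(\mathcal A)\big)$ comes from squaring the chaining bound $\E\sup_{A}\|A\xi\|_2\lesssim\gamma_2(\mathcal A,\|\cdot\|_{\ell_2\to\ell_2})+d_{Fr}(\mathcal A)$ and from the associated cross terms (heuristically, $\|A\xi\|_2^2-\E\|A\xi\|_2^2$ is a product of a factor with fluctuation $\lesssim\gamma_2(\mathcal A,\|\cdot\|_{\ell_2\to\ell_2})$ and a factor of size $\lesssim d_{Fr}(\mathcal A)+\gamma_2(\mathcal A,\|\cdot\|_{\ell_2\to\ell_2})$); the sub-Gaussian tail parameter is $\nu_1=d_{\ell_2\to\ell_2}(\mathcal A)\big(\gamma_2(\mathcal A,\|\cdot\|_{\ell_2\to\ell_2})+d_{Fr}(\mathcal A)\big)$; and the sub-exponential tail parameter is $\nu_2=d_{\ell_2\to\ell_2}^2(\mathcal A)$.

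To conclude I would translate the chaining estimates into tails: the diagonal and decoupled off-diagonal contributions each carry a mixed bound of the form $2\exp\!\big(-c\min\{t^2/\nu_1^2,\,t/\nu_2\}\big)$, the deterministic part being absorbed into the $\mu$-term, and a reduction to a countable dense subset of $\mathcal A$ makes the supremum measurable. That $\xi$ is only sub-Gaussian rather than Gaussian costs no more than constants depending on the parameter $c$: Hanson--Wright, Bernstein and the chaining comparison all hold for sub-Gaussian entries with the same exponents (this is also where the $c$-dependence of $C_4,C_5$ enters, e.g.\ through the contraction principle, Lemma~\ref{lem:contraction}).

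I expect the main obstacle to be the second step: making the two-level generic chaining for a second-order chaos rigorous — the decoupling, the conditional sub-Gaussianity, and above all the bookkeeping showing that the operator-norm metric and the Frobenius radius combine to produce precisely $\mu,\nu_1,\nu_2$ rather than larger functionals. This is exactly the content of Talagrand's chaos bounds and of the Krahmer--Mendelson--Rauhut analysis, and there is no way around a genuine chaining argument.
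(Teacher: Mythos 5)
The paper offers no proof of this theorem at all: it is imported, in slightly specialized form, from \cite[Thm.~3.1]{krmera12}, so your decision to simply invoke that known result is exactly what the paper does. Your accompanying sketch of how the cited theorem is established---generic chaining for the order-two chaos, increments controlled via $\bigl|\,\|A\xi\|_2-\|B\xi\|_2\bigr|\le\|(A-B)\xi\|_2$ together with Hanson--Wright/Bernstein mixed tails, and sub-Gaussianity entering only through the constants---is consistent with the Krahmer--Mendelson--Rauhut argument, so there is nothing further to compare.
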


%In this vectorized form, matrix multiplications from the left and from the right translate to multiplication by the Kronecker product as given in the following definition, $\overrightarrow{XAY} = (Y^*\otimes X)\vec{A}$. 
% Vectorizing a matrix product of the form $XAY$, one obtains $\overrightarrow{XAY} = (Y^*\otimes X)\vec{A}$, where $\otimes$ denotes the Kronecker product given in the definition below.
% 
% \begin{definition} For $X \in \R^{m_1\times n_1}$, $Y\in \R^{m_2\times n_2}$, the Kronecker product $X\otimes Y \in \R^{(m_1m_2) \times (n_1n_2)}$ is given by
% $$X\otimes Y = \left[\begin{array}{ccc}X_{1,1}Y & \hdots & X_{1,n_1}Y \\\vdots & \ddots & \vdots \\X_{m_1, 1}Y & \hdots & X_{m_1,n_1}Y\end{array}\right].$$ 
% \end{definition} 

%%%%%%%%%%%%
%%%%%%%%%%%%
\section{Main results}
\label{sec:frame}

%%%%%%%%%%%%
\subsection{Estimates of singular values and operator norms}

As argued above, a key quantity to control the $\sd$ reconstruction error both in the context of compressed sensing and frame quantization is the norm $\|FD^r u\|_2$, where  $F$ is the Sobolev dual of a sub-Gaussian frame $E$ and $u$ is the associated state vector. This quantity can be controlled by the product of the operator norm $\|FD^r\| = \frac{1}{\sigma_{\rm{min}}(D^{-r}E)}$ and the vector norm $\|u\|_2$. We will estimate these quantities separately in the following two propositions. An estimate for the first quantity can be deduced from the following proposition together with two observations:  First, recall that singular values are invariant under unitary transformations, so $\sigma_{\rm{min}}(D^{-r}E)=\sigma_{\rm{min}}(SV^*E)$, where $USV^*$ is the singular value decomposition of $D^{-r}$. Second, when estimated using Proposition~\ref{prop:diff_mat}, the singular values of $D^{-r}$ are bounded exactly as in the following assumptions.

\begin{proposition}
\label{thm:main_1} 
Let $E$ be an $m \times k$ sub-Gaussian matrix with mean zero, unit variance, and parameter $c$, let $S=\operatorname{diag}(s)$ be a diagonal matrix, and let $V$ be an orthonormal matrix, both of size $m \times m$. Further, let $r\in \Z^+$ and suppose that $s_j\geq C_6^r \left(\frac{m}{j}\right)^r$, where $C_6$ is a positive constant that may depend on $r$. Then there exist constants $C_7, C_8 >0$ (depending on $c$ and $C_6$) such that for $0<\alpha<1$ and $\lambda:= \frac{m}{k} \geq C_7^{\frac{1}{1-\alpha}}$

$$\P\left( \sigma_{\min}(\tfrac{1}{\sqrt{m}} SV^*E) \leq  \lambda^{\alpha(r-1/2)}  \right) \leq 2\exp( - C_8 m^{1-\alpha} k^{\alpha}).$$
In particular, $C_8$ depends only on $c$, while $C_7$ can be expressed as ${f(c)} { C_6^{-\frac{2r}{2r-1} }}$ provided $C_6 \leq 1/2$.
\end{proposition}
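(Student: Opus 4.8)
The plan is to reduce the statement to a lower bound on $\sigma_{\min}$ of a sub-Gaussian matrix composed with a fixed diagonal matrix having rapidly decaying entries, and then to obtain that lower bound from the chaos deviation inequality in Theorem~\ref{thm:main-tail}. First I would note that since $V$ is orthonormal, $V^*E$ is again a sub-Gaussian matrix with the same parameters (its rows are rotations of independent sub-Gaussian rows; more carefully, each \emph{column} of $V^*E$ is an orthogonal image of a sub-Gaussian vector, and one controls the relevant quadratic forms through the contraction principle, Lemma~\ref{lem:contraction}, comparing to the Gaussian case where rotational invariance is exact). So without loss of generality one replaces $SV^*E$ by $SG$ where $G$ is sub-Gaussian, and the task becomes: with high probability, $\inf_{\|x\|_2=1}\frac{1}{m}\|S G x\|_2^2 \geq \lambda^{2\alpha(r-1/2)}$, i.e.\ a uniform-over-the-sphere lower bound.

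Next I would set up the chaos estimate. For a fixed unit vector $x\in\R^k$, $Gx$ is a vector of independent mean-zero, variance-one sub-Gaussian entries, so $\E\|SGx\|_2^2 = \|s\|_2^2 = \sum_j s_j^2 \gtrsim C_6^{2r}\sum_j (m/j)^{2r} \gtrsim C_6^{2r} m^{2r}$ (the sum is dominated by its first term; this is where $s_j \geq C_6^r (m/j)^r$ enters). Define the set of matrices $\mathcal A = \{ \tfrac{1}{\sqrt m} S G_0 x^T \cdot(\text{appropriate vectorization}) : \|x\|_2 = 1\}$ — more precisely one writes $\|\tfrac1{\sqrt m}SGx\|_2^2$ as $\|A_x \vec G\|_2^2$ for a matrix $A_x$ depending linearly on $x$, so that $\mathcal A = \{A_x : \|x\|_2\le 1\}$ is a symmetric set to which Theorem~\ref{thm:main-tail} applies. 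Then I would compute the three geometric parameters of $\mathcal A$: $d_{\ell_2\to\ell_2}(\mathcal A) \lesssim \tfrac{1}{\sqrt m} s_1 \cdot(\text{const})$, $d_{Fr}(\mathcal A) \lesssim \tfrac{1}{\sqrt m}\|s\|_2$, and $\gamma_2(\mathcal A, \|\cdot\|_{\ell_2\to\ell_2}) \lesssim \tfrac{1}{\sqrt m}\|s\|_2\sqrt k$ via Dudley's bound / a volumetric estimate for the sphere in $\R^k$ (this mirrors the singular-value computation for Gaussian matrices in \cite{GLPSY} and the Sobolev-dual analysis). Since $\|s\|_2 \asymp s_1 \asymp m^r$ up to $r$-dependent constants, one gets $\mu \lesssim \tfrac1m \|s\|_2^2 k \asymp m^{2r-1}\cdot(\text{poly in }r)\cdot k/m \cdot m$, and similarly $\nu_1, \nu_2$; the key point is that all three scale like (powers of $s_1/\sqrt m \asymp m^{r-1/2}$) times polynomial-in-$k$ factors. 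Plugging $t$ of order $m^{2r-1}$ into the tail bound, the exponent $\min\{t^2/\nu_1^2, t/\nu_2\}$ becomes of order $m/k$ up to constants, and choosing the deviation level so that the surviving lower bound on $\sigma_{\min}^2$ is $\lambda^{2\alpha(r-1/2)}$ (rather than the full $\lambda^{2r-1}$) buys the extra slack $\lambda^{(1-\alpha)(2r-1)}$ needed both to absorb the $\gamma_2$ (i.e.\ $\sqrt k$) correction term and to make the failure probability $2\exp(-C_8 m^{1-\alpha}k^\alpha)$; this is precisely where the hypothesis $\lambda \geq C_7^{1/(1-\alpha)}$ is used, and tracking $C_6$ through the computation (it appears as $C_6^{-2r}$ in the ratio of the target to $\E\|SGx\|^2$, contributing the stated $C_6^{-2r/(2r-1)}$ to $C_7$) gives the final form of $C_7$.

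The main obstacle I anticipate is the $\gamma_2$-functional bound together with bookkeeping of $r$-dependent constants. The chaos inequality controls deviations uniformly over all $x$ only up to the additive term $C_4\mu$, and $\mu$ carries the $\gamma_2(\mathcal A)^2 \asymp \|s\|_2^2 k/m$ contribution; one must check that after rescaling this term is genuinely smaller than the target $m\cdot\lambda^{2\alpha(r-1/2)}$, which forces the exponent $\alpha(r-1/2)$ to be strictly below $r-1/2$ and imposes the lower bound on $\lambda$. A secondary subtlety is that $\E\|SGx\|_2^2$ is exactly $\|s\|_2^2$ independent of $x$ only because the entries have variance one, so the centering in Theorem~\ref{thm:main-tail} is uniform — this is clean — but one still has to verify the sub-Gaussian-to-Gaussian reduction for $V^*E$ does not degrade the parameter $c$ in an $m$-dependent way, which the contraction principle handles. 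Routine but nontrivial is the explicit evaluation of $d_{\ell_2\to\ell_2}(\mathcal A)$ and the Dudley integral, which I would carry out exactly as in the Gaussian case and then invoke the comparison principle to transfer to sub-Gaussians.
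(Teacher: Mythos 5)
Your plan has a structural flaw that cannot be repaired within the framework you set up: you apply the chaos estimate (Theorem~\ref{thm:main-tail}) directly to the family $\{A_x\}$ built from the \emph{full} decaying matrix $S$, and try to conclude $\inf_{\|x\|_2=1}\|A_x\xi\|_2^2\gtrsim \E\|A_x\xi\|_2^2$ up to slack. But the theorem only yields $\inf_x\|A_x\xi\|_2^2\geq \E\|A_x\xi\|_2^2-C_4\mu-t$, and by your own parameter computation $\mu\geq\gamma_2^2(\mathcal A,\|\cdot\|_{\ell_2\to\ell_2})\asymp s_1^2k/m$, while the mean is $\|s\|_2^2/m\asymp s_1^2/m$ (you note yourself that $\|s\|_2\asymp s_1$ because the sum $\sum_j(m/j)^{2r}$ is dominated by its first term). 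Hence $C_4\mu\asymp k\cdot\E\|A_x\xi\|_2^2$, the right-hand side is negative once $k$ exceeds a constant, and no choice of $t$ or of a smaller target level extracts anything: the slack $\lambda^{(1-\alpha)(2r-1)}$ between the target and the mean is irrelevant because the inequality simply bottoms out at $\E-C_4\mu<0$. The failure is not an artifact of the method: since $S$ is effectively of constant rank, one can pick a unit $x$ in the kernel of the first $k-1$ rows of $G$ to see that $\sigma_{\min}(\tfrac{1}{\sqrt m}SG)\lesssim s_k\|G\|/\sqrt m\approx\lambda^r$, which is far below the mean scale $m^{r-1/2}$; so $\|SGx\|_2^2$ genuinely does not concentrate around its mean uniformly over the sphere. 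Relatedly, the probability exponent $m^{1-\alpha}k^{\alpha}$ in the statement never appears in your computation (you arrive at $m/k$, which is a different quantity), which is a symptom of the same missing ingredient.

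The missing idea, which the paper uses, is a truncation step \emph{before} any concentration: by the Courant min-max principle, every $(m-k+1)$-dimensional subspace meets the span of the first $\widetilde k:=C_9m^{1-\alpha}k^{\alpha}$ coordinates in dimension at least $\widetilde k-k+1$, so $\sigma_{\min}(SV^*E)\geq s_{\widetilde k}\,\sigma_{\min}(P_{\widetilde k}V^*E)$. This trades the decaying weight $S$ for its single value $s_{\widetilde k}$ and reduces the problem to the smallest singular value of the \emph{isotropic} $\widetilde k\times k$ matrix $P_{\widetilde k}V^*E$, for which $d_{Fr}^2=\widetilde k/m$, $\gamma_2^2\lesssim k/m$, and hence $\mu\lesssim\sqrt{k\widetilde k}/m\ll\widetilde k/m=\E$ precisely because $\widetilde k\gg k$; the exponent $\widetilde k=C_9m^{1-\alpha}k^\alpha$ and the constant $C_6^{-2r/(2r-1)}$ in $C_7$ both emerge from optimizing the truncation level. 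A secondary issue: your reduction of $V^*E$ to an i.i.d.\ sub-Gaussian matrix $G$ via the contraction principle is not valid (rotation destroys independence of the entries, and Lemma~\ref{lem:contraction} compares expectations of convex functionals, not distributions); the paper avoids this by absorbing $V^*$ into the deterministic matrix $W_y$ and applying Theorem~\ref{thm:main-tail} to the genuinely independent vector $\xi=\overrightarrow{E^*}$.
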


\begin{proof}
The matrix $SV^*E$ has dimensions $m$ and $k$, so by the Courant min-max principle applied to the transpose one has
\begin{align}
  \sigma_{\min}(SV^*E) &= \min\limits_{\substack{W\subset \R^m\\ \dim W= m-k+1}} \sup\limits_{ z\in W: \|z\|_2=1 }\|E^*V S z\|_2 %\\ & \geq  \min\limits_{\substack{W\subset \R^m\\ \dim W= m-\widetilde k+1}} \sup\limits_{ z\in W: \|z\|_2=1 }\|E^*V S P_{\widetilde k} z\|_2 ,
\end{align}
Noting that, for $m\geq\widetilde k :=C_9 m^{1-\alpha} k^\alpha>k$, where the constant $C_9$ will be determined later, each $m-k+1$-dimensional subspace intersects the span $V_{\widetilde k}$ of the first $\widetilde k$ standard basis vectors in at least a $\widetilde k-k+1$-dimensional space, this expression is bounded from below by 
\begin{align}
 \min\limits_{\substack{W\subset V_{\widetilde k}\\ \dim W= \widetilde k-k+1}} \sup\limits_{ z\in W: \|z\|_2=1 }\|E^*V S z\|_2 
 &\geq \min\limits_{\substack{W\subset V_{\widetilde k}\\ \dim W= \widetilde k-k+1}} \sup\limits_{ z\in W: \|z\|_2=s_{\widetilde k} }\|E^*V z\|_2\\
 &=\min\limits_{\substack{W\subset \R^{\widetilde k}\\ \dim W= \widetilde k-k+1}} \sup\limits_{ z\in W:  \|z\|_2=1}s_{\widetilde k}\|E^* V P^*_{\widetilde k}z\|_2. \label{eq:svdual}
\end{align}
The inequality follows from the observation that $V_{\widetilde k}$ is invariant under $S$ and the smallest singular value of $S|_{V_{\widetilde k}}$ is $s_{\widetilde k}$. In the last step, $P_{\widetilde k}\in \R^{\widetilde k\times m}$ denotes the projection of an $m$-dimensional vector onto its first $\widetilde k$ components. We note that \eqref{eq:svdual}, again by the Courant min-max principle, is equal to
\begin{align}
 s_{\widetilde k}\sigma_k (E^* V P^*_{\widetilde k}) &= s_{\widetilde k}\sigma_{\rm{min}} (P_{\widetilde k} V^* E) =s_{\widetilde k} \inf_{y\in S^{k-1}} \|P_{\widetilde k} V^* E y\|_2 
 \end{align}
Now, as $\E\|P_{\widetilde k} V^* E y\|_2^2=\widetilde k$, 
 \begin{align}
\inf_{y\in S^{k-1}} \|P_{\widetilde k} V^* E y\|_2^2  \geq \Big({\widetilde k}-\sup _{y\in S^{k-1}} \big|\|P_{\widetilde k} V^* E y\|^2_2-\E\|P_{\widetilde k} V^* E y\|_2^2\big|\Big). 
\end{align}
Thus, noting that $\frac{\lambda^{\alpha(r-1/2)}}{s_{\widetilde k}} < m^{\alpha(r-1/2)} k^{-\alpha(r-1/2)} C_6^{-r}  m^{-r} {\widetilde k}^r = C_6^{-r} C_9^{r-\frac{1}{2}} \frac{\sqrt{\widetilde k}}{\sqrt{m}}$ and that by choosing $C_9=\min(\tfrac{1}{2}C_6^\frac{2r}{2r-1}, \tfrac{1}{2})$ we ensure that $1-C_6^{-2r} C_9^{2r-1 }\geq \frac{1}{2}$,
\begin{align}
  \P(\sigma_{\min}&(\tfrac{1}{\sqrt{m}} SV^*E) \leq \lambda^{\alpha(r-1/2)}) \\ &\leq \P(\sup _{y\in S^{k-1}} \big|\|\tfrac{1}{\sqrt{m}} P_{\widetilde k} V^* E y\|^2_2-\E\|\tfrac{1}{\sqrt{m}}P_{\widetilde k} V^* E y\|_2^2\big|
  \geq (1-C_6^{-2r} C_9^{2r-1 })\frac{{\widetilde k}}{ m })\\
  &\leq\P(\sup _{y\in S^{k-1}} \big|\|\tfrac{1}{\sqrt{m}} P_{\widetilde k} V^* E y\|^2_2-\E\|\tfrac{1}{\sqrt{m}}P_{\widetilde k} V^* E y\|_2^2\big|
  \geq \frac{{\widetilde k}}{ 2m }). \label{eq:conc}
\end{align}
Note that this choice of $C_9$ also ensures $\widetilde k\leq m$, which is required above.
We will estimate \eqref{eq:conc} using Theorem~\ref{thm:main-tail}, similarly to the proof of \cite[Thm. A.1]{krmera12}.
Indeed, we can write
\begin{equation}
 \tfrac{1}{\sqrt{m}}P_{\widetilde k} V^* E y =  W_y \xi,
\end{equation}
where $\xi=\overrightarrow{E^*}$ is a vector of length $km$ with independent sub-Gaussian entries of mean zero and unit variance, and
\begin{equation}
 W_y = \frac{1}{\sqrt{m}} P_{\widetilde k} V^* \left( \begin{matrix} y^T & 0 & \cdots & 0\\
0 & y^T & \cdots & 0\\
\vdots & \vdots & \vdots & \vdots\\
0 & \cdots & 0 & y^T
\end{matrix} \right).
\end{equation}
In order to apply Theorem~\ref{thm:main-tail}, we need to estimate, for $\mathcal A = \{W_y : y\in S^{k-1}\}$, $d_{Fr}(\mathcal A)$, $d_{\ell_2\rightarrow \ell_2}(\mathcal A)$, and $\gamma_2({\mathcal A}, \|\cdot\|_{\ell_2\rightarrow \ell_2})$. We obtain for $A= W_y\in\mathcal A$:
\begin{equation}
 \|A\|_{Fr}^2= \frac{1}{m}\sum_{j=1}^k \sum_{p_1,p_2=1}^{\widetilde k, m} y_j^2 V_{p_1,p_2}^2 =\frac{\widetilde k}{m}, \quad \text{ so } \quad d_{Fr}(\mathcal A) = \sqrt{\frac{\widetilde k}{m}}.
\end{equation}
Furthermore, we have, for $z\in \R^k$,
\begin{equation}
\| W_z\|_{\ell_2\rightarrow \ell_2} = \left\|\frac{1}{\sqrt{m}} P_{\widetilde k} V^* \left( \begin{matrix} z^T & 0 & \cdots & 0\\
0 & z^T & \cdots & 0\\
\vdots & \vdots & \vdots & \vdots\\
0 & \cdots & 0 & z^T
\end{matrix} \right)\right\|_{\ell_2\rightarrow \ell_2}\leq \left\|\frac{1}{\sqrt{m}}\left( \begin{matrix} z^T & 0 & \cdots & 0\\
0 & z^T & \cdots & 0\\
\vdots & \vdots & \vdots & \vdots\\
0 & \cdots & 0 & z^T
\end{matrix} \right)\right\|_{\ell_2\rightarrow \ell_2},
\end{equation}
so the quantities $d_{\ell_2\rightarrow \ell_2}(\mathcal A)$ and $\gamma_2({\mathcal A}, \|\cdot\|_{\ell_2\rightarrow \ell_2})$ can be estimated in exact analogy to  \cite[Thm. A.1]{krmera12}.
This yields $d_{\ell_2\rightarrow \ell_2}({\mathcal A}) = \frac{1}{\sqrt{m}}$ and $\gamma_2({\mathcal A}, \|\cdot\|_{\ell_2\rightarrow \ell_2}) \leq C_{10} \sqrt{\frac{k}{m}}$ for some constant $C_{10}$ depending only on $c$.
With these estimates, we obtain for the quantities $\mu$, $\nu_1$, $\nu_2$ in Theorem~\ref{thm:main-tail}
\begin{align}
 \mu\leq & (2C_{10}+2) \frac{\sqrt{k\widetilde k}}{m}\\
 \nu_1\leq & (C_{10} +1)\frac{\sqrt{\widetilde k}}{m}\\
 \nu_2\leq & \frac{1}{m},
\end{align}
so the resulting tail bound reads
\begin{align}
 \P&(\sup _{y\in S^{k-1}} \big|\|\tfrac{1}{\sqrt{m}}P_{\widetilde k} V^* E y\|^2_2-\E\|\tfrac{1}{\sqrt{m}}P_{\widetilde k} V^* E y\|_2^2\big|\geq C_4 (2C_{10}+2)\frac{\sqrt{k\widetilde k}}{m} + t) \leq e^{- C_5 \min \big(\frac{t^2 m^2}{(C_{10} + 1) \widetilde k} , mt\big)}. 
\end{align}
where $C_4$ and $C_5$ are the constants depending only on $c$ as they appear in Theorem~\ref{thm:main-tail}.
Note that $k=\widetilde k \frac{\lambda^{-(1-\alpha)}}{C_9}$, so for oversampling rates $\lambda>\big( (4C_4  (2C_{10}+2))^2/C_9\big)^{\frac{1}{1-\alpha}}=: C_7^{\frac{1}{1-\alpha}}$, we obtain $C_4 E\leq \frac{\widetilde k}{4m}$ and hence, choosing $t=\frac{\widetilde k}{4m}$, we obtain the result
\begin{equation}
  \P(\sup _{y\in S^{k-1}} \big|\|\tfrac{1}{\sqrt{m}}P_{\widetilde k} V^* E y\|^2_2-\E\|\tfrac{1}{\sqrt{m}}P_{\widetilde k} V^* E y\|_2^2\big|\geq \frac{\widetilde k}{2m}) \leq e^{- C_8 \widetilde k} 
\end{equation}
where, as desired, the constant $C_8:= \frac{C_5}{16 (C_{10}+1)}$ depends only on the sub-Gaussian parameter $c$.
\end{proof}

In contrast to the term $\|FD^r\|$ analyzed in the previous proposition, $\|u\|_2$ crucially depends on the quantization procedure that is employed. The procedure employed will be fundamentally different in the frameworks of compressed sensing and frame quantization. While the quantization level in the compressed sensing scheme is chosen sufficiently fine to allow for accurate support recovery via standard compressed sensing techniques, there is no need for this in the context of frame quantization and the quantization scheme employed can be coarse.

In both cases, we will employ $\sd$ schemes which are stable in the sense of \eqref{eq:stable}. 
 As explained in Section \ref{sec:SD}, $\|u\|_2$ can be controlled for such schemes via the input $Ex$. More precisely, to bound $\|u\|_2$, we require a bound on $\|Ex\|_\infty \leq \|E\|_{\ell_2\rightarrow\ell_\infty} \|x\|_2$. Since the matrices $E=E(\omega)$ are random we derive bounds on the operator norms $\|E\|_{\ell_2\rightarrow\ell_\infty}$ that hold with high probability on the draw of $E$.

\begin{proposition}\label{prop:infty_norm_bound}
 Let $\widetilde{E}$ be an $m \times k$ sub-Gaussian matrix with mean zero, unit variance, and parameter $c$, let $E= \frac{1}{\sqrt{m}}\widetilde{E}$ and fix $\alpha\in (0,1)$. Denote the associated oversampling rate by $\lambda:=m/k$.   
Then, with probability at least $1-e^{-\frac{1}{4} \lambda^{1-\alpha}k}$, we have  for all $\lambda> C_{11}^{\frac{1}{1-\alpha}}$
\begin{equation}\|E\|_{\ell_2\rightarrow\ell_\infty} \leq  e^{1/2}\lambda^{-\frac{\alpha}{2}}.\label{eq:infty_norm_bound}\end{equation}
Here $C_{11}$ is a constant that may depend on $c$, but that is independent of $k$ and $\alpha$.
\end{proposition}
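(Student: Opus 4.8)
The plan is to control $\|E\|_{\ell_2\to\ell_\infty}=\max_{1\le i\le m}\|E_i\|_2$, where $E_i$ is the $i$-th row of $E=\frac{1}{\sqrt m}\widetilde E$, by a union bound over the $m$ rows, each row being a concentration estimate for the Euclidean norm of a $k$-dimensional sub-Gaussian vector. First I would fix a row index $i$ and observe that $\|E_i\|_2^2=\frac{1}{m}\|\widetilde E_i\|_2^2=\frac{1}{m}\sum_{j=1}^k \widetilde E_{ij}^2$, a sum of $k$ independent, mean-zero, unit-variance, sub-Gaussian random variables with parameter $c$. Its expectation is $k/m=1/\lambda$. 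We want to show this is at most $e\,\lambda^{-\alpha}$ with overwhelming probability, i.e. a deviation of order $e\lambda^{-\alpha}-\lambda^{-1}$, which — once $\lambda$ is large enough depending only on $c$ — is at least a fixed multiple of $\lambda^{-\alpha}$. The cleanest way to do this is again via Theorem~\ref{thm:main-tail}, applied to the singleton set $\mathcal A=\{\pm\frac{1}{\sqrt m}e_i^T\otimes I_k\}$ (symmetrized), so that $\|A\xi\|_2^2=\|E_i\|_2^2$ with $\xi=\overrightarrow{\widetilde E^{\,*}}$; here $d_{Fr}(\mathcal A)=\sqrt{k/m}=\lambda^{-1/2}$, $d_{\ell_2\to\ell_2}(\mathcal A)=\frac{1}{\sqrt m}$, and $\gamma_2(\mathcal A,\|\cdot\|_{\ell_2\to\ell_2})=0$ since the set is a single point, so $\mu=0$, $\nu_1=\frac{1}{\sqrt m}\sqrt{k/m}=\frac{\sqrt k}{m}$, $\nu_2=\frac{1}{m}$. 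Applying the theorem with $t=c'\lambda^{-\alpha}$ for an appropriate absolute constant $c'$ gives
$$\P\Big(\big|\|E_i\|_2^2-\tfrac{k}{m}\big|\ge c'\lambda^{-\alpha}\Big)\le 2\exp\Big(-C_5\min\big(\tfrac{c'^2\lambda^{-2\alpha}m^2}{k},c'\lambda^{-\alpha}m\big)\Big).$$
Since $\lambda^{-2\alpha}m^2/k=\lambda^{1-2\alpha}k$ and $\lambda^{-\alpha}m=\lambda^{1-\alpha}k$, and since $1-2\alpha$ can be negative while $1-\alpha>0$, the binding term is $\lambda^{1-2\alpha}k$; to make the exponent at least $\frac{1}{4}\lambda^{1-\alpha}k+\log(2m)$ after the union bound I would require $\lambda\ge C_{11}^{1/(1-\alpha)}$ for a suitable $C_{11}=C_{11}(c)$, absorbing the $\log m\le k\log(e\lambda)$ term into a fraction of $\lambda^{1-\alpha}k$ (here one uses $m=\lambda k$ and that $\lambda^{1-\alpha}$ grows faster than $\log\lambda$ once $\lambda$ is large).

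The remaining arithmetic is bookkeeping: choose the constant in front of $\lambda^{-\alpha}$ so that $\frac{k}{m}+c'\lambda^{-\alpha}=\lambda^{-1}+c'\lambda^{-\alpha}\le e\,\lambda^{-\alpha}$, which holds for all $\lambda\ge 1$ provided $c'\le e-1$ (since $\lambda^{-1}\le\lambda^{-\alpha}$), and then conclude $\|E_i\|_2^2\le e\lambda^{-\alpha}$, i.e. $\|E_i\|_2\le e^{1/2}\lambda^{-\alpha/2}$, on the complement of the bad event. Taking a union bound over $i=1,\dots,m$ and over the symmetrization sign, the failure probability is at most $2m\exp(-C_5 c'^2\lambda^{1-2\alpha}k)$, and the choice of $C_{11}$ ensures $2m\exp(-C_5c'^2\lambda^{1-2\alpha}k)\le e^{-\frac14\lambda^{1-\alpha}k}$.

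The main obstacle — such as it is — is making sure the tail exponent $\lambda^{1-2\alpha}k$, which can be much smaller than $\lambda^{1-\alpha}k$ when $\alpha$ is close to $1$, still dominates the $\log m=\log(\lambda k)$ coming from the union bound over the $m$ rows, and still yields a clean bound of the advertised form $e^{-\frac14\lambda^{1-\alpha}k}$. This is why the threshold $\lambda\ge C_{11}^{1/(1-\alpha)}$ is needed with an exponent $\frac{1}{1-\alpha}$ (rather than a fixed power): one needs $\lambda$ large enough, in a manner that degrades as $\alpha\to 1$, for the subexponential concentration to beat the polynomial-in-$\lambda$ logarithmic overhead. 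Everything else is a direct specialization of Theorem~\ref{thm:main-tail} to a rank-one set of matrices, exactly parallel to (but simpler than) the argument in Proposition~\ref{thm:main_1}, where now the $\gamma_2$-functional vanishes because $\mathcal A$ is essentially a single matrix.
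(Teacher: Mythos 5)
Your overall skeleton is the same as the paper's: reduce $\|E\|_{\ell_2\to\ell_\infty}$ to the maximum of the row norms, prove a concentration bound for each row, and finish with a union bound over the $m$ rows. Where you differ is in the per-row estimate: the paper bounds the moment generating function of $\|\eta\|_2^2$ directly, using the contraction principle (Lemma~\ref{lem:contraction}) with $G(x)=e^{x^2t/2}$ to reduce to an exactly computable Gaussian chi-square MGF, and then optimizes the Chernoff parameter; you instead invoke the chaos estimate of Theorem~\ref{thm:main-tail} on a (symmetrized) singleton set, which is legitimate --- it degenerates to a Hanson--Wright-type inequality for $\|A\xi\|_2^2$ --- and structurally sound. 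Two points, though. First, there is an arithmetic slip: since $m=\lambda k$, one has $m^2/k=\lambda^2 k$, so $t^2/\nu_1^2=c'^2\lambda^{-2\alpha}m^2/k=c'^2\lambda^{2-2\alpha}k$, not $\lambda^{1-2\alpha}k$. Consequently the binding term in the minimum is $t/\nu_2=c'\lambda^{1-\alpha}k$, which is exactly the order you want, and the ``main obstacle'' you describe (the exponent $\lambda^{1-2\alpha}k$ being too small for $\alpha$ near $1$) is an artifact of the miscalculation --- indeed, if $\lambda^{1-2\alpha}k$ really were the exponent, your argument would break down for $\alpha>1/2$, since that quantity decreases in $\lambda$ while the union-bound overhead $\log(2m)$ grows. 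Second, because the exponent you obtain is $C_5c'\lambda^{1-\alpha}k$ with $C_5$ an unspecified constant from Theorem~\ref{thm:main-tail} and $c'\le e-1$ forced by the requirement $\lambda^{-1}+c'\lambda^{-\alpha}\le e\lambda^{-\alpha}$, you cannot guarantee the specific constant $\tfrac14$ in the failure probability $e^{-\frac14\lambda^{1-\alpha}k}$; you prove the statement with $\tfrac14$ replaced by some constant depending on $c$. That is harmless for the downstream applications, but it is precisely what the paper's explicit MGF computation buys: fully explicit constants in the tail, without appealing to the heavier generic-chaining machinery.
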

\begin{proof}
Since
 \begin{equation}
 \|E\|_{\ell_2 \rightarrow \ell_\infty} = \max_{i\in\{1,...,m\}}  \Big(\sum_{j=1}^{k} e_{ij}^2\Big)^{1/2}=\max_{\eta \text{ is a row of E}} \|\eta\|_2,
 \end{equation}
 we will focus on bounding the norm $\|\eta\|_2$ for random vectors $\eta \in \R^k$ consisting of independent sub-Gaussian entries $\eta_i$ with parameter $\frac{1}{\sqrt{m}}$. Using Markov's inequality as well as the contraction principle applied to the increasing convex function $G: x\mapsto e^{x^2t/2}$ and the sequence $x_i=e_i$ of standard basis vectors, we reduce to the case of a $k$-dimensional random vector $\xi \sim \G(0,\frac{1}{m} I)$:
 \begin{align}
  \P\Big(\|\eta\|^2_2   \geq \Theta/ \lambda\Big)&\leq \inf_{t>0} e^{-t\Theta/(2\lambda)}\E e^{\|\eta\|_2^2t/2} \\
  &\leq \inf_{t>0} e^{-t\Theta/(2\lambda)}\E e^{e\|\xi\|_2^2t/2}\\
      &= \inf_{t>0} e^{-t\Theta/(2\lambda)} \big(1-et/m\big)^{-k/2}\\
 &\leq \Big(\frac{\Theta}{e}\Big)^{k/2} \exp\Big(-(\Theta/e-1)k/2 \Big)
\\&= \Theta^{k/2} \exp\Big(-\frac{\Theta}{e}\frac{k}{2} \Big),
 \end{align}
 where we set $t=m(\frac{1}{e}-\frac{1}{\Theta})$ to obtain the third inequality.
 Applying a union bound over the $m$ rows and specifying $\Theta=e\lambda^{1-\alpha}$ we obtain for $\lambda$ sufficiently large:
 \begin{align}
\P( \|E\|_{\ell_2 \rightarrow \ell_\infty} \geq e^{1/2}\lambda^{-\alpha/2}) %&\leq m \lambda^{(1-\alpha) k/2} \exp\Big(-(\lambda^{1-\alpha}-1)k/2 \Big)\\ &\leq e^{-c m^{1-\alpha}k^\alpha}, 
&\leq m  e^{k/2}\lambda^{(1-\alpha) k/2} \exp\Big(-\lambda^{1-\alpha}k/2 \Big)\\
&= k\lambda \left(\lambda^{1-\alpha} \exp\Big(-\frac{1}{2}\lambda^{1-\alpha} \Big)\right)^{k/2} e^{-\frac{1}{4} m^{1-\alpha}k^\alpha}\\
&=: k\lambda f(\lambda^{1-\alpha})^{-k/2} e^{-\frac{1}{4} m^{1-\alpha}k^\alpha}\\
&\leq e^{-\frac{1}{4} m^{1-\alpha}k^\alpha},
 \end{align}
 where we used that $f$ is independent of $k$ and grows superlinearly, so above some threshold $C_{11}^{\frac{1}{1-\alpha}}$, $f(\lambda^{1-\alpha})^{-k/2}$ can absorb both $\lambda$ and $k$.

\end{proof}

\begin{remark}
Clearly, when the entries of $\widetilde E$ are \emph{bounded} random variables, there exists a finite, deterministic upper bound on the operator norm $\|E\|_{\ell_2\rightarrow\ell_\infty}$, obtained via the bounds on the matrix entries. In fact, for Bernoulli matrices, the resulting 
bounds are sharp.
\end{remark}

\subsection{Root-exponential accuracy for sub-Gaussian frames}
\begin{theorem}\label{thm:root-exponential}
Let $\widetilde{E}$ be an $m\times k$ sub-Gaussian  
matrix with mean zero, unit variance, and parameter $c$,  let $E=\frac{1}{\sqrt{m}}\widetilde{E}$ and suppose that 
\[\lambda:=\frac{m}{k} \geq C_{12},    \]
where $C_{12}$ is an appropriate constant that only depends on $c$.
For a vector $x\in \R^k$, denote by $Q^{(r^*)}(Ex)$ the $2L$-level $\sd$ quantization of $Ex$ using a scheme as given in \eqref{eq:SDrec2} which satisfies \eqref{eq:best_known}  with order $r^*:= \lfloor {\lambda^{1/2}}/(2C_{13})\rfloor $ and $\delta> \frac{4 e^{1/2}}{\lambda^{1/4}L}$. Denote by $F$ the $r^*$-th order Sobolev dual of $E$. Then, with probability exceeding $1-3e^{-C_{14} \sqrt{mk}}$ on the draw of $E$, the reconstruction error satisfies 
$$\|x-FQ^{(r)}(Ex)\|_2  \leq C_{15} \sqrt{k} e^{-C_{13} {\lambda^{1/2}}}\delta$$
uniformly for all $x$ with $\|x\|_2\leq 1$, 
where $C_{13}$, $C_{14}$, and $C_{15}$ are appropriate constants depending only on $c$. 
If the entries of $\widetilde E$ are bounded by an absolute constant $K$ almost surely, the condition on $\delta$ can be relaxed to $\delta>\frac{4 K}{L}\lambda^{-1/2}$.
\end{theorem}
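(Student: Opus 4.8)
The plan is to run the Sobolev-dual error analysis of Section~\ref{sderranalysis} for the coarse scheme of Proposition~\ref{prop:best_known}, and then feed in the two probabilistic inputs — the singular-value estimate of Proposition~\ref{thm:main_1} and the $\ell_2\to\ell_\infty$ estimate of Proposition~\ref{prop:infty_norm_bound} — with the parameter choice $\alpha=\tfrac12$. Write $q=Q^{(r^*)}(Ex)$ and let $u$ be the associated state vector, so that $D^{r^*}u=Ex-q$ and, since $FD^{r^*}=(D^{-r^*}E)^\dagger$ by \eqref{eq:sob_dual},
\[
\|x-Fq\|_2=\|FD^{r^*}u\|_2\le \|FD^{r^*}\|\,\|u\|_2=\frac{\|u\|_2}{\sigma_{\min}(D^{-r^*}E)} .
\]
I would bound the two factors on the right separately and track the exceptional events.

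For the denominator, take the singular value decomposition $D^{-r^*}=USV^*$; unitary invariance gives $\sigma_{\min}(D^{-r^*}E)=\sigma_{\min}(SV^*E)=\sigma_{\min}(\tfrac1{\sqrt m}SV^*\wt E)$, and Proposition~\ref{prop:diff_mat} shows the diagonal of $S$ satisfies $s_j\ge (3\pi r^*)^{-r^*}(m/j)^{r^*}$, i.e.\ the hypothesis of Proposition~\ref{thm:main_1} holds with $C_6=(3\pi r^*)^{-1}$. Applying that proposition with $\alpha=\tfrac12$ yields, outside an event of probability at most $2e^{-C_8\sqrt{mk}}$, the bound $\sigma_{\min}(D^{-r^*}E)\ge \lambda^{(r^*-1/2)/2}$. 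One has to verify the applicability condition $\lambda\ge C_7^{2}$: for $C_6=(3\pi r^*)^{-1}$ one has $C_7=f(c)(3\pi r^*)^{2r^*/(2r^*-1)}\le Cf(c)\,r^*\le Cf(c)\,\sqrt\lambda/C_{13}$, so $\lambda\ge C_7^2$ holds once $C_{13}$ is chosen large enough in terms of $c$ and $\lambda\ge C_{12}$.

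For the numerator, first bound the quantizer input by $\|Ex\|_\infty\le\|E\|_{\ell_2\to\ell_\infty}\|x\|_2\le\|E\|_{\ell_2\to\ell_\infty}=:\mu$, and apply Proposition~\ref{prop:infty_norm_bound} with $\alpha=\tfrac12$: outside an event of probability at most $e^{-\frac14\sqrt{mk}}$ (using $\lambda^{1/2}k=\sqrt{mk}$) one has $\mu\le e^{1/2}\lambda^{-1/4}$. The hypothesis $\delta>4e^{1/2}\lambda^{-1/4}/L$ then forces $\mu/\delta<L/4$, hence $\mu<\delta(L-\tfrac12)$, so Proposition~\ref{prop:best_known} applies; moreover $\gamma=2L-2\mu/\delta\ge\tfrac32$, so the constant $C_3=C_3(\gamma)$ appearing there is bounded by a universal constant. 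Thus $\|u\|_\infty\le c\,C_3^{r^*}(r^*)^{r^*}\tfrac\delta2$ and $\|u\|_2\le\sqrt m\,\|u\|_\infty\le \tfrac c2\sqrt m\,C_3^{r^*}(r^*)^{r^*}\delta$. Combining the two bounds and using $\sqrt m=\sqrt{\lambda k}$,
\[
\|x-Fq\|_2\le \frac c2\,\sqrt k\,\lambda^{3/4}\Big(\frac{C_3 r^*}{\sqrt\lambda}\Big)^{r^*}\delta .
\]
Since $r^*\le\sqrt\lambda/(2C_{13})$, enlarging $C_{13}$ (in terms of $c$ and the universal bound on $C_3$) makes $C_3 r^*/\sqrt\lambda\le\tfrac12$, so the bracket is at most $2^{-r^*}\le e^{-c'\sqrt\lambda}$ because $r^*\ge\sqrt\lambda/(4C_{13})$ once $\lambda\ge C_{12}$; enlarging $C_{12}$ further absorbs the polynomial factor $\lambda^{3/4}$, which after renaming constants gives $\|x-Fq\|_2\le C_{15}\sqrt k\,e^{-C_{13}\sqrt\lambda}\delta$. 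A union bound over the two exceptional events gives failure probability at most $2e^{-C_8\sqrt{mk}}+e^{-\frac14\sqrt{mk}}\le 3e^{-C_{14}\sqrt{mk}}$ with $C_{14}=\min\{C_8,\tfrac14\}$, where $C_{12}$ is taken to exceed $\max\{C_7^2,C_{11}^2\}$ and all the numerical thresholds above, each depending only on $c$. For the bounded-entry case, $|e_{ij}|\le K/\sqrt m$ deterministically gives $\|E\|_{\ell_2\to\ell_\infty}\le K\sqrt k/\sqrt m=K\lambda^{-1/2}$, so Proposition~\ref{prop:infty_norm_bound} is not invoked, the weaker hypothesis $\delta>4K\lambda^{-1/2}/L$ again yields $\mu/\delta<L/4$, and one exceptional event disappears.

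The step I expect to be the main obstacle is exactly this constant bookkeeping. Because $C_6=(3\pi r^*)^{-1}$ degrades as the order $r^*\sim\sqrt\lambda$ grows, the applicability threshold $C_7^{1/(1-\alpha)}$ of Proposition~\ref{thm:main_1} is itself of order $\lambda/C_{13}^{2}$, so $C_{13}$ must be taken large — in terms of the sub-Gaussian parameter only — merely for the singular-value estimate to be usable; one then has to check that this same $C_{13}$ still leaves $r^*$ comparable to $\sqrt\lambda$, so that $(C_3 r^*/\sqrt\lambda)^{r^*}$ decays geometrically in $r^*$, and to reconcile the constant used in the order with the constant in the error exponent. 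Everything else is the standard Sobolev-dual computation together with Propositions~\ref{prop:diff_mat}, \ref{prop:best_known}, \ref{thm:main_1}, and \ref{prop:infty_norm_bound}.
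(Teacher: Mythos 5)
Your proposal follows essentially the same route as the paper's proof: the same decomposition $\|x-Fq\|_2\le\|FD^{r^*}\|\,\|u\|_2$, the same inputs (Proposition~\ref{prop:diff_mat} to verify the hypothesis of Proposition~\ref{thm:main_1} with $C_6=(3\pi r^*)^{-1}$ and $\alpha=\tfrac12$, Proposition~\ref{prop:infty_norm_bound} with $\alpha=\tfrac12$ for the quantizer input, and Proposition~\ref{prop:best_known} for stability), the same check that the applicability threshold of Proposition~\ref{thm:main_1} reduces to $\lambda^{1/2}\gtrsim r^*$, and the same union bound and bounded-entry variant. The only cosmetic difference is that the paper motivates $r^*$ by minimizing $f(r)=C r^r C^r\lambda^{-(r-1/2)/2}\delta\sqrt m/2$ over $r$ while you directly force the bracket $C_3 r^*/\sqrt\lambda\le\tfrac12$ to get geometric decay; these amount to the same constant bookkeeping.
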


\begin{proof}
Observe the following facts:
\begin{enumerate}
\item[(I)] Since $FD^r = (D^{-r}E)^\dagger$, then
$\|FD^r\| = \frac{1}{\sigma_{min}(D^{-r}E)}=\frac{1}{\sigma_{min}(SV^*E)}$, where $D^{-r} = USV^*$ is the singular value decomposition of $D^{-r}$. Thus, by Proposition \ref{thm:main_1} with $C_6 = 1/(3\pi r)$ and $\alpha=1/2$, it holds that with probability greater than $1-2e^{-C_8 \sqrt{mk}}$,  
\[\|FD^r\| \leq \lambda^{- \frac{1}{2}(r-1/2)}\]
provided $\lambda\geq C_{16}^2 (3\pi r)^\frac{4r}{(2r-1)}.$ Since $(3\pi r)^\frac{1}{{2r-1}}$ is decreasing with $r \geq1$, 
the condition is satisfied if \begin{equation}\lambda^{1/2}\geq C_{16}(3\pi)^2 r =: C_{17}r. \label{eq:lambda_cond}\end{equation}
Without loss of generality, we may assume $C_{17} \geq 1$. 
\item[(II)] By Proposition \ref{prop:infty_norm_bound} with $\alpha=1/2$ and for $\lambda>C_{11}^2$, we have   $$\|E\|_{\ell_2 \rightarrow \ell_\infty} \leq e^{1/2} \lambda^{-1/4}$$ with probability greater than $1-e^{-\frac{1}{4} \sqrt{mk}}$.
If the entries of $E$ are bounded by $K/\sqrt{m}$ a.s. then  \[\|E\|_{\ell_2 \rightarrow \ell_\infty} =\max_{\eta \text{ is a row of E}} \|\eta\|_2 \leq K\lambda^{-1/2} \quad \text{ almost surely.}\]
\item[(III)] As this implies that the quantizer input satisfies $\|Ex\|_\infty \leq e^{1/2} \lambda^{-1/4} \|x\|_2$ or $\|Ex\|_\infty \leq K \lambda^{-1/2} \|x\|_2$, respectively, our assumption \eqref{eq:best_known} implies that for all signals $x \in\R^k$ with $\|x\|_2\leq 1$, the resulting state vector satisfies $$\|u\|_\infty \leq C_{18} C_{19}^r r^r \delta/2,$$ where $C_{19} = \lceil \frac{\pi^2}{(\cosh^{-1}\gamma)^2}\rceil \frac{e}{\pi}$ for $\gamma = 2L - 2 e^{1/2}\lambda^{-1/4}/\delta$ or $\gamma = 2L -2K \lambda^{-1/2}/\delta$.
By Cauchy-Schwarz, this entails \[\|u\|_2\leq C_{18} C_{19}^r r^r \delta/2 \sqrt{m}.\]
The respective constraints on $\delta $ ensure that in both cases, $\frac{3}{2}L<\gamma<2L$, so $C_{19}$ is well-defined and bounded above by $10$. Moreover, $\gamma$ and $C_{19}$ effectively do not depend on $\lambda$. 

\end{enumerate}
Thus, the reconstruction error $x-FQ^{(r)}(Ex) = FD^r u$ satisfies 
$$\|FD^r u\|_2 \leq C_{18} C_{19}^r r^{r} \lambda^{-\frac{1}{2}(r-1/2) }\frac{\delta}{2}\sqrt{m} =:f(r). $$ 
Motivated by the observation \[\arg\min\limits_{r}f(r) = e^{-1}\frac{{\lambda^{1/2}}}{{ C_{19}}}\]  we define $r^*= 
\Big\lfloor  \frac{e^{-1}\lambda^{1/2}}{C_{17}  C_5}  \Big\rfloor, $ which is adjusted to satisfy \eqref{eq:lambda_cond}. As the above argument only works for $r\geq 1$, we need the additional requirement that $\lambda \geq C_{20}:=100 e^2 C_{17}^2$. We estimate for such $\lambda$
 \begin{align}f( r^* ) &\leq  C_4 e \cdot {m^{1/2}\lambda^{1/4}} (C_{17} e)^ { - \frac{1}{e}{\frac{\lambda^{1/2}}{ C_{17}C_5 }}  }\delta/2 \leq C_4 e\sqrt{k}\lambda^{3/4} \exp \left( - \frac{1}{e}{\frac{\lambda^{1/2}}{ C_{17} C_5 }}  \right)\delta/2  &
\\ &\leq C_{21}\sqrt{k} \lambda^\frac{3}{4}e^{-2C_{13}{\lambda^{1/2}}}{\delta}, &
\\ &\leq C_{22}\sqrt{k} e^{-C_{13}{\lambda^{1/2}}}{\delta},% \quad\quad \forall \lambda \geq 1
\end{align}
where $C_{13}$, $C_{21}$, and $C_{22}$ are appropriate constants.
Here the last inequality stems from the observation that the polynomial factor is dominated by the exponential term.  
The theorem follows by choosing $C_{12} = \max\{C_{11}^2, C_{20}\}$.
\end{proof}

By combining slight modifications of (I), (II), (III) in the proof of Theorem \ref{thm:root-exponential} above, one can deduce the following proposition which will be useful when dealing with quantization in the compressed sensing context. 
Below $F$ is $r$-th order Sobolev dual of $E$ and $Q^{(r)}(Ex)$ is the $r$-th order, $2L$-level greedy $\sd$ scheme, with step-size  $\delta$  as in Section \ref{subsec:greedy}. 

\begin{proposition}\label{prop:polynomial}
Let $\widetilde{E}$ be an $m\times k$ sub-Gaussian  
matrix with mean zero, unit variance, and parameter $c$, let $E=\frac{1}{\sqrt{m}}\widetilde{E}$, and fix an integer $r\geq 1$. Suppose that 
\[\lambda:=\frac{m}{k} \geq C_{12}',    \]
where $C_{12}'$ is an appropriate constant that only depends on $c$ and $r$. Let $0<\alpha<1$.
For an integer $L\geq \lceil\frac{e^{1/2}\lambda^{-\alpha/2}}{\delta}\rceil +2^r+1$, %denote by $Q^{(r)}(Ex)$ the $2L$-level $\sd$ quantization of $Ex$ using a  greedy $\sd$ scheme with step-size $\delta$. Denote by $F$ the $r$-th order Sobolev dual of $E$. Then,
 with probability exceeding $1-3e^{-C_{14}' m^{1-\alpha}k^\alpha}$ on the draw of $E$, the reconstruction error satisfies 
$$\|x-FQ^{(r)}(Ex)\|_2  \leq  \lambda^{-\frac{\alpha}{2}(r-1/2) }\frac{\delta}{2}\sqrt{m} $$
uniformly for all $x$ with $\|x\|_2\leq 1$, 
where  $C_{14}'$, is an appropriate constant depending only on $c$. 
If the entries of $\widetilde E$ are bounded by an absolute constant $K$ almost surely, the condition on $\delta$ can be relaxed to $L\geq \lceil\frac{K\lambda^{-1/2}}{\delta}\rceil +2^r+1$.
\end{proposition}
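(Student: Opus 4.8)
The plan is to obtain Proposition~\ref{prop:polynomial} as a near-corollary of the machinery already assembled, essentially by rerunning the three-step argument (I)--(III) from the proof of Theorem~\ref{thm:root-exponential} with three changes: the order $r$ is held fixed rather than tuned to $\lambda$; the exponent $1/2$ is replaced throughout by the given $\alpha\in(0,1)$; and the coarse $\sd$ scheme of Section~\ref{CoSD} is replaced by the greedy scheme of Section~\ref{subsec:greedy}. The last substitution actually simplifies matters, since greedy stability delivers the clean bound $\|u\|_\infty\le\delta/2$ with no order-dependent stability constant to propagate.

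For the first step I would control $\|FD^r\|$. Writing $D^{-r}=USV^*$ for the singular value decomposition, Proposition~\ref{prop:diff_mat} gives $s_j=\sigma_j(D^{-r})\ge C_6^r(m/j)^r$ with $C_6:=1/(3\pi r)\le 1/2$, so Proposition~\ref{thm:main_1} applies with this $C_6$ and with the exponent $\alpha$: for $\lambda\ge C_7^{1/(1-\alpha)}$, where $C_7=f(c)(3\pi r)^{2r/(2r-1)}$ (using $C_6\le 1/2$), with probability at least $1-2e^{-C_8 m^{1-\alpha}k^\alpha}$ one has $\sigma_{\min}(D^{-r}E)=\sigma_{\min}(\tfrac{1}{\sqrt m}SV^*\widetilde E)>\lambda^{\alpha(r-1/2)}$, and hence $\|FD^r\|=1/\sigma_{\min}(D^{-r}E)<\lambda^{-\alpha(r-1/2)}$ by the Sobolev-dual identity \eqref{eq:sob_dual}. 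For the second step I would invoke Proposition~\ref{prop:infty_norm_bound} with the same $\alpha$: for $\lambda\ge C_{11}^{1/(1-\alpha)}$, with probability at least $1-e^{-\frac14 m^{1-\alpha}k^\alpha}$ one has $\|E\|_{\ell_2\to\ell_\infty}\le e^{1/2}\lambda^{-\alpha/2}$ (and the deterministic bound $\|E\|_{\ell_2\to\ell_\infty}\le K\lambda^{-1/2}$ in the bounded-entries case), so every input $y=Ex$ with $\|x\|_2\le 1$ satisfies $\|y\|_\infty\le e^{1/2}\lambda^{-\alpha/2}$ (resp.\ $\le K\lambda^{-1/2}$). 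With $L$ as in the hypothesis this is exactly the greedy stability condition recalled in Section~\ref{subsec:greedy}, so \eqref{greedy-stab} yields $\|u\|_\infty\le\delta/2$ and thus $\|u\|_2\le\sqrt m\,\|u\|_\infty\le\tfrac{\delta}{2}\sqrt m$.

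Finally I would combine these on the intersection of the two high-probability events — whose complement has probability at most $3e^{-C_{14}'m^{1-\alpha}k^\alpha}$ with $C_{14}'=\min(C_8,\tfrac14)$ — and set $C_{12}':=\max\bigl(C_7^{1/(1-\alpha)},C_{11}^{1/(1-\alpha)}\bigr)$. The error identity $x-FQ^{(r)}(Ex)=FD^r u$ from Section~\ref{sderranalysis} then gives
\[\|x-FQ^{(r)}(Ex)\|_2=\|FD^ru\|_2\le\|FD^r\|\,\|u\|_2\le\lambda^{-\alpha(r-1/2)}\tfrac{\delta}{2}\sqrt m\le\lambda^{-\frac{\alpha}{2}(r-1/2)}\tfrac{\delta}{2}\sqrt m,\]
the last inequality using $\lambda\ge 1$; the bounded-entries case is identical except that the second event is dropped (and the slightly relaxed condition on $L$ suffices).

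I do not expect any genuine obstacle here — all the hard analysis is already packaged in Propositions~\ref{thm:main_1} and \ref{prop:infty_norm_bound} and in the elementary greedy stability bound. The only points that need care are bookkeeping ones: verifying that $C_{12}'$ (via $C_7$) depends only on $c$, $r$ and $\alpha$ as claimed; making sure the union bound over the two failure events is taken \emph{before} conditioning, so that the resulting estimate holds uniformly over all $x$ with $\|x\|_2\le 1$; and checking $C_6=1/(3\pi r)\le 1/2$ so that the explicit form of $C_7$ from Proposition~\ref{thm:main_1} is available.
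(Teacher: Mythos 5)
Your proposal is correct and follows essentially the same route as the paper's own proof, which likewise reruns steps (I)--(III) of Theorem~\ref{thm:root-exponential} with general $\alpha$ in place of $1/2$, a fixed order $r$, and the greedy scheme's clean stability bound $\|u\|_\infty\le\delta/2$ in step (III). Your only addition is the explicit observation that $\lambda^{-\alpha(r-1/2)}\le\lambda^{-\frac{\alpha}{2}(r-1/2)}$ for $\lambda\ge 1$, which reconciles the sharper bound the argument actually yields with the weaker exponent stated in the proposition.
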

\begin{proof}
The proof proceeds along the same lines as the proof of Thm 17. In fact, parts (I) and (II) are identical, albeit with a general $\alpha$ rather than $\alpha=\frac{1}{2}$. In part (III) we obtain that the state vector satisfies $$\|u\|_\infty \leq  \delta/2,$$ which entails \[\|u\|_2\leq \frac{\delta}{2} \sqrt{m}.\]
Combining (I) and (III) we see that $$\|FD^ru\|_2 =\|x-FQ^{(r)}(Ex)\|_2  \leq  \lambda^{-\frac{\alpha}{2}(r-1/2) }\frac{\delta}{2}\sqrt{m}. $$
Since the involved constants only depend on $c$, the statement follows.
\end{proof}

\begin{remark}
Note that in Theorem~\ref{thm:root-exponential}, the normalization of the entries depends on the number of measurements taken. In particular, this entails that the number of measurements needs to be assigned  a-priori. In practice, however, one may wish to acquire as many measurements as possible and quantize the measurements on the fly. This can be accomplished when the entries of $\widetilde E$ are bounded by some constant $K$. In this case, $E$ consists of entries bounded by $K/\sqrt{m}$, so Theorem~\ref{thm:root-exponential} predicts root-exponential accuracy for all quantization levels $\delta>2K \lambda^{-1/2}$. Hence the quantization level can be chosen independently of $m$ even if the whole system is multiplied by $\sqrt{\lambda}$. This multiplication yields normalized rows rather than columns, so the normalization does not depend on $m$, as desired.
\end{remark}

\subsection{Polynomial accuracy for compressed sensing quantization}
Analogously to \cite{GLPSY}, the above bounds on the singular values
can be used to establish recovery guarantees from $\sd$ quantized
compressed sensing measurements. To obatin these guarantees, as in
\cite{GLPSY}, the decoding is performed via a two-stage algorithm:
First, a \emph{robust} compressed sensing decoder is used to recover a
coarse estimate and particularly the support of the underlying sparse
signal from the quantized measurements. Next, the estimate is refined
using a Sobolev-dual based reconstruction method. 

%\begin{definition}
%Let $\Phi$ be an $m\times N$ sub-Gaussian matrix, $\varepsilon>0$ and
%let $k$ be a positive integer . We say that $\Delta$
%is a robust compressed sensing decoder for $\Phi$ if 
%\begin{equation}
%\|x-\Delta(\Phi x+e)\| \le C_{27} \varepsilon, 
%\end{equation}
%with probability exceeding $1-2e^{-C_{28}m}$ on the draw of $\Phi$,
%  for all $x\in \Sigma_k^N$ and $\|e\|_2 \le \varepsilon$. Here
%  $C_{27}$ and $C_{28}$ are constants that depend only on the
%  restricted isometry 
%  constants---see, e.g., \cite{CRT05}---of $\Phi$. 
%\end{definition}

\begin{definition}
  The restricted isometry constant (see, e.g., \cite{CRT05})
  $\gamma_k$ of a matrix $\Phi \in \R^{m\times N}$ is the smallest
  constant for which
$$(1-\gamma_k)\|x\|_2^2\leq \|\Phi x\|_2^2 \leq (1+\gamma_k )\|x\|_2^2
$$
for all $x\in \Sigma_k^N$.
\end{definition}

\begin{definition}
  Let $\varepsilon>0$, let $m,N$ be positive integers such that
  $m<N$ and suppose that $\Phi\in\R^{m\times N}$. We say that
  $\Delta: \R^{m\times N}\times\R^m\rightarrow \R^N$ is a robust compressed sensing decoder with
  parameters $(k, a,\gamma)$, $k<m$, and constant $C$ if
\begin{equation}
\|x-\Delta(\Phi, \Phi x+e)\| \le C \varepsilon, 
\end{equation}
for all $x\in \Sigma_k^N$, $\|e\|_2 \le \varepsilon$, and all matrices
$\Phi$ with a restricted isometry constant
$\gamma_{ak}<\gamma$. %Here $C$ is a constant that depends only on the restricted isometry constants---see, e.g., \cite{CRT05}---of $\Phi$.
\end{definition}
\begin{remark}
 Examples for robust compressed sensing decoders include $\ell_1$-minimization \cite{CRT05}, greedy algorithms such as orthogonal matching pursuit \cite{zh11}, and greedy-type algorithms such as CoSaMP \cite{netr08}.
\end{remark}

One then obtains the following result.
\begin{theorem}\label{thm:polynomial}
Let $r\in \Z^+$, fix $a\in\N$, $\gamma<1$, and $c, C>0$. Then there exist constants $C_{23}, C_{24},
C_{25}, C_{26}$ depending only on these parameters such that the following holds.

Fix 
$0<\alpha<1$, let $\Phi$ be an $m \times N$ sub-Gaussian matrix with mean zero, unit
variance, and parameter $c$, and suppose that $k\in\N$ satisfies %  $\lambda:=m/k$ satisfies
  $$\lambda:=\frac{m}{k}\geq \Big(C_{23}  \log(eN/k)\Big)^{\frac{1}{1-\alpha}}.$$

Furthermore, suppose that $\Delta:\R^{m\times N}\times \R^m\rightarrow \R^N$ is a robust compressed sensing
decoder with parameters $(k,a,\gamma)$ and constant $C$.

 Consider the $2L$-level $r$th order greedy $\sd$ schemes with step-size $\delta$, $L\geq \lceil\frac{K\lambda^{-1/2}}{\delta}\rceil +2^r+1$. Denote by $q$ the quantization output resulting from $\Phi z$ where $z\in\R^N$. Then with probability exceeding $1-4e^{-C_{24} m^{1-\alpha}k^{\alpha}}$ for all $z\in \Sigma_k^{N}$ having $\min\limits_{j\in \rm{supp}{(z)}}|z_j| > C_{25}  \delta$:
 \be
 \item[(i)] The support of $z$, $T$, coincides with the support of the best $k$-term approximation of $\Delta(q)$. 
 \item[(ii)] Denoting by $E$ and $F$ the sub-matrix of $\Phi$ corresponding to the support of $z$ and its $r$th order Sobolev dual respectively, and by $x\in\R^k$ the restriction of $z$ to its support, we have
  $$\|x-Fq\|_2 \leq C_{26} \lambda^{-\alpha (r-1/2)}\delta.$$ 
 \ee 
\end{theorem}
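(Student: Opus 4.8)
The plan is to reduce Theorem~\ref{thm:polynomial} to the two-stage decoding scheme of \cite{GLPSY}, using the sub-Gaussian ingredients already developed in the excerpt. First I would invoke the standard restricted isometry property for sub-Gaussian matrices (e.g. \cite{ve12-1}): since $m \geq \lambda k \geq (C_{23}\log(eN/k))^{1/(1-\alpha)} k \geq C k\log(eN/k)$ for $C_{23}$ large enough, the matrix $\tfrac{1}{\sqrt m}\Phi$ has restricted isometry constant $\gamma_{ak} < \gamma$ with probability at least $1 - e^{-cm}$, and in fact this probability dominates $1-e^{-C_{24}m^{1-\alpha}k^\alpha}$ because $m \geq m^{1-\alpha}k^\alpha$. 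On this event, and using that the quantization error $Ex - Q^{(r)}(Ex) = D^r u$ has $\ell_2$-norm at most $\|D^r\|\|u\|_2 \leq 2^r \cdot \tfrac{\delta}{2}\sqrt m$ (from $\|D^r\|\le 2^r$ and the greedy stability bound $\|u\|_\infty \le \delta/2$), the robust decoder $\Delta$ applied to $q$ returns an estimate within $C\varepsilon$ of $z$ with $\varepsilon = O(2^r \delta \sqrt{m})$ — but this is the point where one must be careful: the naive bound gives an error proportional to $\sqrt m$, which is useless for support recovery unless one rescales. The correct statement, as in \cite{GLPSY}, is that $\|x - \Delta(\Phi,q)\|_\infty$ or $\|x-\Delta\|_2$ is controlled by $\|Ex - Q(Ex)\|_2$ measured against the \emph{normalized} matrix, i.e. the relevant error is $\|\tfrac{1}{\sqrt m}(\Phi z - q)\|_2 \le 2^{r-1}\delta$, so $\Delta$ recovers $z$ up to $\ell_2$-error $C_{25}'\delta$ for a suitable constant.

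Next, for part (i), the size assumption $\min_{j\in\mathrm{supp}(z)}|z_j| > C_{25}\delta$ with $C_{25} = 2C_{25}'$ guarantees that every true support coordinate of the estimate $\Delta(q)$ exceeds $C_{25}'\delta$ in magnitude while every off-support coordinate is below it; hence the best $k$-term approximation of $\Delta(q)$ has support exactly $T = \mathrm{supp}(z)$. This is a routine thresholding argument once the $\ell_2$ (hence $\ell_\infty$) error bound from the previous step is in hand. For part (ii), having identified $T$ correctly, the submatrix $E = \tfrac{1}{\sqrt m}\Phi_T$ (restricted to its $k$ columns) is itself a normalized $m\times k$ sub-Gaussian matrix, so Proposition~\ref{prop:polynomial} applies directly: with $F$ the $r$th-order Sobolev dual of $E$, $\|x - Fq\|_2 \le \lambda^{-\frac\alpha2(r-1/2)}\tfrac\delta2\sqrt m$. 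Wait — this still carries a stray $\sqrt m$; the resolution, matching the statement, is that Proposition~\ref{prop:polynomial} is stated for the $E=\tfrac1{\sqrt m}\widetilde E$ normalization but the reconstruction error there is already the genuine $\ell_2$ error of the \emph{unnormalized-coefficient} problem, and in the compressed sensing reduction one feeds $Fq$ where $q$ quantizes $\Phi_T x = \sqrt m\, E x$; tracking the $\sqrt m$ factors through $FD^r = (D^{-r}E)^\dagger$ and Proposition~\ref{thm:main_1} (which bounds $\sigma_{\min}(\tfrac1{\sqrt m}SV^*E)$) shows the factors cancel, leaving exactly $C_{26}\lambda^{-\alpha(r-1/2)}\delta$ after absorbing the fixed $r$-dependent constants $2^r$, $C_{18}C_{19}^r r^r$ into $C_{26}$.

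The probability bookkeeping is then a union bound over three events: the RIP event for $\Phi$ (probability $\ge 1-e^{-cm}$), the singular-value event of Proposition~\ref{thm:main_1} applied to each of the $\binom{N}{k}$ possible support submatrices (probability $\ge 1 - \binom{N}{k}\cdot 2e^{-C_8 m^{1-\alpha}k^\alpha}$), and the $\ell_2\to\ell_\infty$ norm event of Proposition~\ref{prop:infty_norm_bound}, likewise union-bounded over supports. The key calculation here is that $\binom{N}{k} \le (eN/k)^k = e^{k\log(eN/k)}$, so the condition $\lambda \ge (C_{23}\log(eN/k))^{1/(1-\alpha)}$, i.e. $m^{1-\alpha}k^\alpha = \lambda^{1-\alpha}k \ge C_{23}^{1-\alpha} k\log(eN/k)$, makes $C_8 m^{1-\alpha}k^\alpha$ exceed $2k\log(eN/k)$ once $C_{23}$ is chosen large enough relative to $C_8$; then $\binom Nk e^{-C_8 m^{1-\alpha}k^\alpha} \le e^{-\frac12 C_8 m^{1-\alpha}k^\alpha}$ and all three failure probabilities combine to at most $4e^{-C_{24}m^{1-\alpha}k^\alpha}$ for a suitable $C_{24}$.

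The main obstacle I anticipate is not any single hard estimate but the careful tracking of normalizations and the $\sqrt m$ factors across the three regimes (the robust decoder works with $\Phi$, Proposition~\ref{prop:infty_norm_bound} and Proposition~\ref{prop:polynomial} with $E = \tfrac1{\sqrt m}\widetilde E$, and the Sobolev-dual reconstruction with $F = (D^{-r}E)^\dagger D^{-r}$), together with verifying that the threshold $C_{25}$ for the nonzero entries is large enough — in absolute (not $\delta$-relative) terms — to survive the decoder's error amplification constant $C$ and still permit exact support identification via best $k$-term approximation. Everything else reduces to results already in hand: the RIP for sub-Gaussian matrices, Proposition~\ref{thm:main_1}, Proposition~\ref{prop:infty_norm_bound}, Proposition~\ref{prop:polynomial}, and a union bound over supports.
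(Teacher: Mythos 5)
Your proposal follows the same two-stage route as the paper: RIP plus the robust decoder (and the thresholding argument of \cite[Proposition 4.1]{GLPSY}) for exact support identification, then Proposition~\ref{prop:polynomial} applied to the support submatrix $\Phi_T$, with a union bound over the $\binom{N}{k}$ supports whose cost $e^{k\log(eN/k)}$ is absorbed by the exponent $C_8 m^{1-\alpha}k^{\alpha}\geq C_{23}C_8\,k\log(eN/k)$ exactly as you compute --- this is precisely how the paper obtains both the stated probability and the condition on $\lambda$. Your explicit tracking of the $1/\sqrt{m}$ normalization (the cancellation between $\|u\|_2\leq \sqrt{m}\,\delta/2$ and $\|FD^r\|\leq m^{-1/2}\lambda^{-\alpha(r-1/2)}$ for the unnormalized $\Phi_T$) correctly resolves the apparent $\sqrt{m}$ discrepancy between Proposition~\ref{prop:polynomial} and the theorem, a point the paper's sketch leaves implicit.
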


The proof essentially traces the same steps as in \cite{GLPSY}.  In
particular, as $\Phi$ is a sub-Gaussian matrix, and by choosing $C_{23}$ large enough, we can ensure that $m \geq
2C_{27}\gamma^{-2}ak \ln\frac{eN}{ak}$ and obtain $\gamma_{ak}\leq \gamma$
with probability exceeding $1-2e^{-\gamma^2m/(2C_{27})}$ (see, e.g.,
\cite[Theorem 9.2]{HolgerBook}). Here $C_{27}$ depends only $c$. Since
$\Delta$ is a robust compressed sensing decoder, (i) follows from
\cite[Proposition 4.1]{GLPSY}.

 Once the support of $x$, say $T$, is recovered,
Proposition~\ref{prop:polynomial} applies to the $m\times k$ sub-Gaussian
matrix $\Phi_T$, the submatrix of $\Phi$ consisting of columns of
$\Phi$ indexed by $T$. Finally, (ii) follows from a union bound over
all submatrices consisting of $k$ columns of $\Phi$. This union bound, coupled with the probability bound used to obtain a restricted isometry constant $\gamma_{ak}<\gamma$ yields the probability in the statement of the theorem as well as the condition on $\lambda$. As all the
proof ingredients established above follow closely the
corresponding results in \cite{GLPSY}, we omit the details.

\begin{remark}
 Note that root-exponential error decay, using quantization with a fixed number of levels, as obtained for the frame quantization case in Theorem~\ref{thm:root-exponential} is hindered in the compressed sensing scenario by the support recovery step of the two stage algorithm in \cite{GLPSY}. 
\end{remark}

\section*{Funding}
This work was supported in part by a Banting Postdoctoral Fellowship administered by the Natural Sciences and Engineering Research Counsel of Canada (NSERC) [to R.S.];   an NSERC Discovery Grant [to {\"O.Y.}];  an NSERC Discovery Accelerator Supplement Award [to {\"O.Y.}].

\section*{Acknowledgment}
The authors thank Evan Chou for useful conversations and comments that greatly improved this manuscript.

\bibliographystyle{IEEEtran}
\bibliography{quantization}

\end{document}